\newcommand{\ewise}[1]{\langle \langle #1 \rangle \rangle}
\newcommand{\stabpm}{\mathrm{Stab}_{\pm}}
\newcommand{\binentropy}{\mathsf{H}}
\newcommand{\minentropy}{\mathsf{H}_\infty}
\newcommand{\F}{\mathbb{F}}
\renewcommand{\hat}{\widehat}
\theoremstyle{plain}
\newtheorem{theorem}{Theorem}[section]
\newtheorem{corollary}[theorem]{Corollary}
\newtheorem{definition}[theorem]{Definition}
\newtheorem{lemma}[theorem]{Lemma}
\newcommand{\pauli}{\mathcal{P}}
\newcommand{\stabprodset}{\mathcal{SP}}
\newcommand{\eps}{\varepsilon}
\newcommand{\dist}{\mathrm{dist}}
\renewcommand{\Pr}{\mathop{\bf Pr\/}}
\newcommand{\tr}{\mathrm{tr}}
\newcommand{\calB}{\mathcal{B}}
\newcommand{\calC}{\mathcal{C}}
\newcommand{\calF}{\mathcal{F}}
\newcommand{\calY}{\mathcal{Y}}
\renewcommand{\C}{\mathbb{C}}
\newcommand{\Z}{\mathbb{Z}}
\newcommand{\abs}[1]{\lvert #1 \rvert}
\newcommand{\ketbra}[2]{\ket{#1}\!\!\bra{#2}}
\begin{document}

\title{Agnostic Tomography of Stabilizer Product States}

\author{Sabee Grewal}
\email{sabee@cs.utexas.edu}
\homepage{https://sabeegrewal.com/}
\affiliation{University of Texas at Austin, Austin, TX 78712, USA}
\orcid{0000-0002-8241-560X}

\author{Vishnu Iyer}
\email{vishnu.iyer@utexas.edu}
\homepage{https://vishnuiyer.org/}
\orcid{0000-0001-8072-1390}
\affiliation{University of Texas at Austin, Austin, TX 78712, USA}

\author{William Kretschmer}
\email{kretsch@cs.utexas.edu}
\homepage{https://wkretschmer.github.io/}
\affiliation{University of Texas at Austin, Austin, TX 78712, USA}
\affiliation{Simons Institute for the Theory of Computing, University of California, Berkeley, Berkeley, CA 94720, USA}
\orcid{0000-0002-7784-9817}

\author{Daniel Liang}
\email{daniel.liang@ll.mit.edu}
\homepage{https://daniel-you-liang.github.io/}
\affiliation{Portland State University, Portland, OR 97201, USA}
\affiliation{Rice University, Houston, TX 77005, USA}
\orcid{0000-0002-7418-0468}

\maketitle

\begin{abstract}
We define a quantum learning task called \textit{agnostic tomography}, where given copies of an arbitrary state $\rho$ and a class of quantum states $\mathcal{C}$, the goal is to output a succinct description of a state that approximates $\rho$ at least as well as any state in $\mathcal{C}$ (up to some small error $\varepsilon$).
This task generalizes ordinary quantum tomography of states in $\mathcal{C}$ and is more challenging because the learning algorithm must be robust to perturbations of $\rho$.

We give an efficient agnostic tomography algorithm for the class $\mathcal{C}$ of $n$-qubit stabilizer product states.
Assuming $\rho$ has fidelity at least $\tau$ with a stabilizer product state, the algorithm runs in time $n^{O(\log(2/\tau))} / \varepsilon^2$, which is $\mathsf{poly}(n/\varepsilon)$ for any constant $\tau$.
\end{abstract}

\section{Introduction}

\emph{Exponentiality} is at the heart of quantum mechanics. 
A system of merely $20$ qubits already requires over a million parameters to describe. 
This exponentiality is both a boon and a curse. 
On the one hand, it plays a key role in quantum speedups, such as Shor's algorithm for factoring integers~\cite{shor1997polynomial}. 
On the other hand, it seems we need an exponential number of amplitudes to describe systems of interest, preventing them from having tractable classical representations. 

A common approach to handling the exponential complexity of physically and computationally relevant systems is to search for an \textit{approximation} by a simple state (sometimes called an \textit{ansatz}). For example, there is an extensive body of work dedicated to finding approximations by product state (i.e., unentangled) ansatzes, especially for groundstates of local Hamiltonians; see~\cite{brandao2016product,kallaugher2024complexity} for an overview. 
Of course, product states are not the only possible ansatz: Hilbert space has many interesting classes of quantum states that are tractable---states that we can describe or classically simulate in polynomial time. Examples include matrix product states~\cite{garcia2007matrix} and stabilizer states~\cite{gottesman1998heisenberg,aaronson2004simulation}, which have both seen use as groundstate ansatzes~\cite{verstraete2008matrix,sun2024stabilizer}.

The problem of approximating an arbitrary state by a simpler one is equally natural in the setting of quantum learning, where the goal is to recover a classical description of an unknown quantum state from examples.
If we wish to learn an unknown state $\rho$ from nature, we cannot always assume that $\rho$ has an exact succinct classical description, but we might have an intuition that it is well-approximated by one of the families described above. 
Alternatively, if the state or learning algorithm is corrupted by even a small amount noise, it might be impossible to learn $\rho$ exactly, forcing us to settle for an approximation.

Nevertheless, the problem of efficiently learning an ansatz state approximation to an unknown $\rho$ has not seen much rigorous study in quantum algorithms, despite the ubiquity of ansatz-based approaches in fields like Hamiltonian complexity. Of course, there are many efficient quantum algorithms for learning highly expressive classes of quantum states, including matrix product states~\cite{cramer2010efficient}, (near)-stabilizer states~\cite{montanaro-bell-sampling,grewal2023efficient,leone2023learning,hangleiter2023bell}, free-fermion states~\cite{aaronson2023efficient}, low-degree phase states~\cite{arunachalam2022phase}, and outputs of constant-depth circuits~\cite{huang2024learning}. But generally speaking, these algorithms are extremely brittle: correctness requires that $\rho$ belong to the corresponding state class \textit{exactly}, and they may fail even if $\rho$ is affected by $1\%$ noise.

On the other hand, there is a rich study of a similar task in classical learning theory, known as \textit{agnostic learning}.
In agnostic learning, the goal is to find a classifier for a dataset that performs close to optimally from some fixed class of models.
The motivation for designing agnostic learning algorithms is much the same as for algorithms that approximate quantum states by simpler ones: real-world datasets rarely match abstract models exactly, and some fraction of the data could be corrupted adversarially by noise.

\subsection{Agnostic Tomography}
Inspired by classical agnostic learning, we define a learning task that captures the analogous problem for quantum states.

\begin{definition}[Agnostic tomography]
    Let $\calC$ be some class of pure quantum states. \emph{Agnostic tomography} with respect to $\calC$ is the following task: given copies of an arbitrary unknown mixed state $\rho$ and $0 < \eps < 1$, output a succinct classical description\footnote{``Succinct classical description'' is deliberately vague: the appropriate notion may depend on $\calC$ or the application in mind. For most purposes, a reasonable choice is a short quantum circuit that prepares $\ket{\phi}$.}
    of a state $\ket{\phi}$ such that
    \[
    \braket{\phi|\rho|\phi} \ge \sup_{\ket{\varphi} \in \calC} \braket{\varphi|\rho|\varphi} - \eps.
    \]
    \emph{Proper agnostic tomography} is the same, but where we additionally require that $\ket{\phi} \in \calC$.
\end{definition}

We suggest the name \textit{agnostic tomography} because quantum tomography is the task of recovering a full classical description of an unknown quantum state~\cite{FocusQuantumTomography2013}.
Agnostic tomography captures tomography of states in $\calC$ (to error $\eps$ in fidelity) as a special case: this corresponds to imposing the promise that the unknown state $\rho$ is itself in $\calC$.

Note that similar notions have appeared in prior work, albeit with different names. For example, O'Donnell and Wright~\cite{odonnell2017efficient} gave an exponential-time algorithm for learning the optimal rank-$k$ approximation of a mixed state $\rho$.
Grewal, Iyer, Kretschmer, and Liang~\cite[Theorem 1.3]{grewal2023improved} introduced an exponential-time agnostic tomography algorithm for stabilizer states, under the condition that the input state is pure.
They additionally gave a polynomial-time version of the algorithm under the promise that the optimal fidelity is at least $\cos^2(\pi/8)$.
Our definition is also close to the definition of quantum Hypothesis Selection given by B\u{a}descu and O'Donnell~\cite{buadescu2021improved}, except that we view the class $\calC$ as fixed in advance, rather than a variable input to an algorithm. A survey by Anshu and Arunachalam~\cite{anshu2023survey} further pointed out the analogy between agnostic learning and quantum Hypothesis Selection.

Informally, the goal of agnostic tomography is to find a simple ansatz from $\calC$ for a much more complicated quantum state.
The ansatz should attain a fidelity that is close to the optimal achievable within $\calC$.
Operationally, we envision experimentalists running agnostic tomography algorithms to learn an ansatz for their complicated physical system, and then working with the ansatz to extract additional meaningful information about the system.
Agnostic tomography might also find theoretical applications: for example,~\cite{grewal2023improved} suggested that an agnostic tomography algorithm for stabilizer states could be useful as a subroutine for finding low-rank stabilizer decompositions of magic states, which could in turn improve the runtime of certain classical algorithms for simulating quantum circuits~\cite{Bravyi2019simulationofquantum}.

The choice to make $\calC$ a class of pure states is not essential; this is only for simplicity.
One could straightforwardly generalize our definition to include mixed states by using the mixed state fidelity $F(\rho,\sigma) = \left(\tr{\sqrt{\sqrt{\rho} \sigma \sqrt{\rho}}}\right)^2$ as the measure of correlation.\footnote{Even more generally, one could consider other measures of correlation, such as $1$ minus the trace distance.
We chose fidelity because it is particularly easy to work with when $\calC$ consists of pure states.}
However, we have good reason to demand that algorithms operate on mixed input states $\rho$ instead of only pure states, because one of the key applications of agnostic tomography is learning in the presence of noise, and noise will generally affect the purity of a state.

We emphasize that agnostic tomography is much harder than ordinary tomography, even for extremely simple classes of states $\calC$. 
Intuitively, this is because tomography algorithms get to exploit additional structure from assuming that $\rho$ belongs to $\calC$, and this structure may be lost for states $\rho$ that are only slightly outside of $\calC$.
By contrast, agnostic tomography algorithms must be robust to small perturbations: if $\rho$ has $90\%$ fidelity with some $\ket{\phi} \in \calC$, the algorithm should detect this.
It is also important to note that agnostic tomography algorithms cannot simply assume that the input state $\rho$ is obtained from a state in $\calC$ by some fixed noise channel (e.g., depolarizing noise), and thus agnostic tomography is distinct from a variety of noisy learning tasks considered in prior work~\cite{OT22-learning,Gollakota2022hardnessofpac,poremba_et_al:LIPIcs.ITCS.2026.108,khesin2025average}.

As an illustrative example of the challenges in agnostic tomography, consider $\calC$ to be the class of $n$-qubit product states, meaning the states that decompose into a tensor product $\ket{\psi_1} \otimes \ket{\psi_2} \otimes \cdots \otimes \ket{\psi_n}$ of single-qubit states $\ket{\psi_i}$. There is a straightforward tomography algorithm to learn an unknown state $\ket{\psi}$ from $\calC$: perform tomography separately on each qubit of $\ket{\psi}$. However, this algorithm fails catastrophically to be a good agnostic learner for $\calC$. If we feed in the GHZ state $\rho = \left(\frac{\ket{0^n} + \ket{1^n}}{\sqrt{2}}\right)\left(\frac{\bra{0^n} + \bra{1^n}}{\sqrt{2}}\right)$, the local density matrix of each qubit looks maximally mixed, giving no information. Yet, the global state $\rho$ has fidelity $1/2$ with either $\ket{0^n}, \ket{1^n} \in \calC$. This further reveals that any agnostic learner for product states will have to go beyond looking at single-qubit reduced density matrices.\footnote{A more general example shows that even when the product state fidelity is arbitrarily close to $1$, single-qubit local density matrices are insufficient to find the best product state approximation; see~\cite[Section 2]{bakshi24learning}.}

Despite being a perfectly natural learning model, one reason agnostic tomography may not have gained much attention in the literature is that many quantum learning algorithms optimize sample complexity without worrying about computational efficiency~\cite{aaronson2018shadow,HKP20-classical-shadows,buadescu2021improved}.
However, agnostic tomography is often \textit{easy} from the perspective of sample complexity alone.
Concretely, Aaronson~\cite{aaronson2018shadow} showed that \textit{shadow tomography} on an unknown state $\rho$ lets one estimate $\braket{\varphi|\rho|\varphi}$ up to error $\eps$, for every $\ket{\varphi} \in \calC$, using a number of samples of $\rho$ that scales polynomially in $\log |\calC|$ and $\eps$. So, if $\calC$ is any class of $n$-qubit (pure) states that can be prepared with at most $\poly(n)$ gates (e.g., product states, stabilizer states, or matrix product states of polynomial bond dimension), we have $\log |\calC| \le \poly(n)$, and therefore $\calC$ can be agnostically learned with only $\poly(n, 1/\eps)$ samples. For this reason, the main question of interest surrounding agnostic tomography is whether we can design algorithms that are efficient in both sample complexity \textit{and} runtime.

\subsection{Our Result}

The main technical result of this work is, to our knowledge, the first efficient agnostic tomography algorithm for a nontrivial class of states. We say ``nontrivial'' because there are some uninteresting state classes that are very easy to learn agnostically. For example, let $\mathcal{C}$ be any orthonormal basis that can be measured efficiently, such as the computational basis. Then, by repeatedly measuring in the $\mathcal{C}$-basis and outputting the empirical mode, one can efficiently find the best approximation of $\rho$ by a state in $\mathcal{C}$.

We show how to agnostically learn the set of \textit{stabilizer product states}. These states are tensor products of the 6 single-qubit states $\{\ket{0}, \ket{1}, \ket{+}, \ket{-}, \ket{i}, \ket{-i}\}$ that are $\pm 1$ eigenstates of the single-qubit Pauli operators $X$, $Y$, $Z$. They are so-called because they are precisely the intersection between the set of stabilizer states and the set of product states. We learn this class agnostically in quasipolynomial time, resolving an open question raised by~\cite{grewal2023improved}:

\begin{theorem}
    \label{thm:main_informal}

    There is a proper agnostic tomography algorithm for the class
    \[
    \calC \coloneqq \{\ket{0}, \ket{1}, \ket{+}, \ket{-}, \ket{i}, \ket{-i}\}^{\otimes n}
    \]
    of $n$-qubit stabilizer product states. 
    Specifically, given copies of an unknown state $\rho$ and $0 < \eps < 1$, the algorithm outputs a classical description of a stabilizer product state $\ket{\phi}$ such that
    \[
    \braket{\phi|\rho|\phi} \ge \max_{\ket\varphi \in \calC} \braket{\varphi|\rho|\varphi} - \eps,
    \]
    with constant probability. 
    Assuming $\max_{\ket\varphi \in \calC} \braket{\varphi|\rho|\varphi} \ge \tau$, the algorithm runs in time
    \[
    \frac{n^{O(\log(2/\tau)}}{\eps^2}.
    \]
\end{theorem}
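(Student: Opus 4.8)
The plan is to reduce the problem to a guided combinatorial search over per-qubit choices. First I would rewrite the objective. A stabilizer product state $\ket\varphi$ is specified by a Pauli axis $P_i \in \{X,Y,Z\}$ and a sign $s_i \in \{\pm 1\}$ on each qubit, with $\ket{\varphi_i}\bra{\varphi_i} = \tfrac12(I + s_i P_i)$, so that
\[
\braket{\varphi|\rho|\varphi} = \frac{1}{2^n}\sum_{T \subseteq [n]} \Big(\prod_{i \in T} s_i\Big)\, \tr\!\Big(\rho \bigotimes_{i \in T} P_i\Big).
\]
Equivalently, writing $\Phi(\sigma) \coloneqq \max_{\ket\psi \in \calC}\braket{\psi|\sigma|\psi}$, the optimum obeys the single-qubit recursion $\Phi(\rho) = \max_{P,s}\, p_{s,P}\,\Phi(\rho_{s,P})$, where $p_{s,P}$ is the probability of outcome $s$ when the first qubit is measured in basis $P$ and $\rho_{s,P}$ is the corresponding post-measurement state on the remaining qubits. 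This recursion is the engine of the argument.

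The key structural observation I would isolate is that conditioning on the \emph{correct} measurement never hurts: if $\ket\varphi = \ket{\varphi_1}\otimes\ket{\varphi'}$ is optimal, then measuring qubit $1$ in its axis and postselecting on its sign gives $\braket{\varphi'|\rho_{s_1,P_1}|\varphi'} = \tau/p_{s_1,P_1} \ge \tau$. Iterating over an optimal ordering, the conditional success probabilities $q_1,\dots,q_n$ of the correct outcomes satisfy $\prod_i q_i = \braket{\varphi|\rho|\varphi} = \tau$, hence $\sum_i \log(1/q_i) = \log(1/\tau)$. So only $O(\log(1/\tau))$ qubits can be ``uncertain'' in the sense that the correct conditional outcome has probability bounded away from $1$ (say $q_i \le 1-\gamma$ for a fixed constant $\gamma$). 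For every other qubit the conditional single-qubit marginal has a Bloch vector of length $>1-2\gamma$, which both pins down the axis $P_i$ (the unique dominant component, since the Bloch vector has norm at most $1$) and its sign $s_i$; crucially, for such a qubit the optimal choice must coincide with this dominant direction.

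Given this, the algorithm I would run is a guided search. Enumerate every candidate set $S \subseteq [n]$ of uncertain qubits with $|S| \le k = O(\log(1/\tau))$ (there are $\binom{n}{k} = n^{O(\log(1/\tau))}$ of them), together with a guessed axis and sign for each qubit of $S$ (a further $6^{k} = \mathrm{poly}(1/\tau)$ choices). For each guess, condition $\rho$ on the guessed outcomes of $S$, estimate the single-qubit marginals of all remaining qubits from the conditioned state, and read off each remaining qubit's axis and sign as its dominant Bloch direction; this yields one candidate stabilizer product state. Finally, use fresh copies of $\rho$ to estimate $\braket{\phi|\rho|\phi}$ for every candidate $\ket\phi$ and output the best. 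For the correct guess the reconstruction reproduces the optimal $\ket\varphi$, so the best candidate has fidelity at least $\tau - \eps$ once estimation error is accounted for, and we may safely ignore the regime $\tau < \eps$. Estimating each $\braket{\phi|\rho|\phi}$ and each marginal to $\eps$-scale precision costs $\mathrm{poly}(n,1/\eps)$ copies, giving the runtime $n^{O(1+\log(1/\tau))}/\eps^2$ after a union bound over all $n^{O(\log(1/\tau))}$ guesses.

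The main obstacle is making the conditioning step correct and cheap simultaneously. Reading off a ``certain'' qubit $j$ appears to require conditioning on the correct outcomes of all earlier qubits, including the many certain ones; but conditioning on an outcome of probability $1-O(\gamma)$ disturbs the rest of the state by $O(\gamma)$, and naively these disturbances accumulate to $O(n\gamma)$ across the $\Theta(n)$ certain qubits, forcing $\gamma \lesssim \eps/n$ and thereby blowing up $|S|$ to $\mathrm{poly}(n)$. The resolution I would pursue is to show that the uncertainty concentrates into $O(\log(1/\tau))$ ``clusters,'' so that conditioning on a single representative qubit per cluster---i.e.\ on the set $S$ alone, of size $O(\log(1/\tau))$---already renders every remaining qubit certain. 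This keeps the number of physical conditionings at $O(\log(1/\tau))$, eliminating accumulation, so the postselection (whose total success probability is at least $\tau$) together with the marginal estimation can be carried out with $\mathrm{poly}(n,1/\tau,1/\eps)$ copies; equivalently, the needed conditional marginals are governed by Pauli expectations on supports of size $O(\log(1/\tau))+1$, which are efficiently estimable. Proving this concentration-of-uncertainty lemma, and carefully tracking the approximation error it introduces into the fidelity of the output, is where the real work lies.
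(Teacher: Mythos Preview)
Your approach is genuinely different from the paper's. The paper never conditions on single-qubit measurements of $\rho$; instead it uses \emph{Bell difference sampling} on four copies of $\rho$ to draw a Pauli operator $P$. The core structural lemma (Lemma~3.3 in the paper) shows that, conditional on $P$ landing in the unsigned stabilizer group $S^*$ of the optimal $\ket\varphi$, the per-qubit marginals of $P$ have entropy close to $1$ on all but $O(\log(1/\tau))$ coordinates; this is a two-line argument from a min-entropy bound on the conditional distribution plus subadditivity of entropy. Hence $O(\log n)$ such samples reveal the correct Pauli axis on all but $O(\log(1/\tau))$ qubits, and the algorithm brute-forces those. The outer search is over size-$O(\log n)$ subsets of the $O(\log n/\tau^4)$ \emph{samples}, not over subsets of qubits, and the sequential conditional structure of $\rho$ is never analyzed.

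Your plan has a real gap at exactly the place you flag. ``Certain'' for qubit $j$ means $q_j > 1-\gamma$ \emph{conditional on all of $1,\ldots,j-1$ being correct}, whereas your algorithm only conditions on the uncertain set $S$. The step ``conditioning on $S$ alone already renders every remaining qubit certain'' is asserted but not proved, and the ``clustering'' picture is never made precise. Concretely, suppose $j-1 \notin S$ is certain and $j \in S$ is uncertain. Bayes' rule gives $\Pr[\neg A_{j-1} \mid A_S] \lesssim \gamma / q_j$, and the only lower bound you have on an uncertain $q_j$ is $q_j \ge \tau$, so this can be $\Theta(\gamma/\tau)$. Forcing it to be small requires $\gamma = O(\tau)$, which blows up the uncertain set to $|S| = \Theta(\log(1/\tau)/\tau)$ and destroys the claimed runtime. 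Equivalently, after conditioning on $S$ the fidelity with $\ket{\varphi_{\bar S}}$ is only guaranteed to be $\tau/\Pr[A_S] \ge \tau$, not close to $1$, so there is no reason the single-qubit Bloch vectors of $\rho_{|S}$ must point along the correct axes. The paper's route sidesteps this entirely: its entropy argument needs only the pointwise bound $q_\psi(P) \le 2^{-n}$ and the mass bound $\sum_{P \in S^*} q_\psi(P) \ge \tau^4$, neither of which touches the sequential measurement structure you are trying to control.
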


The parameter $\tau$ is taken as an auxiliary input to the algorithm.
If $\tau$ is at least a constant, then the runtime becomes polynomial in $n$ and $1/\eps$.
If $\tau$ is not known in advance, we can run the algorithm repeatedly and verify the fidelity of its output to perform a binary search over $\tau$.

Despite the apparent simplicity of stabilizer product states (compared to, say, general product states), we remark that the set of stabilizer product states is a useful class of states in its own right that appears in physically relevant scenarios.
As an example, a recent work by Bakshi, Liu, Moitra, and Tang~\cite[Theorem 1.1]{bakshi2024high} showed that sufficiently high-temperature Gibbs states of local Hamiltonians are expressible as probabilistic mixtures of stabilizer product states.

\subsection{Main Ideas}
To simplify presentation in the main body of this text, we will explain the algorithm and prove its correctness under the assumption that the input $\rho$ is a pure state $\ket{\psi}$. We explain how to prove its correctness on mixed state inputs in \Cref{sec:appendix}. This only requires changing a couple of lemmas that appear at the start of the technical sections, and does not materially affect the structure of the proof.

Our algorithm can be seen as a specialization of the algorithm due to Grewal, Iyer, Kretschmer, and Liang~\cite{grewal2023improved} that finds a stabilizer state whose fidelity with an input state is $\eps$-close to maximal.
\cite{grewal2023improved}'s algorithm is itself a generalization of Montanaro's algorithm~\cite{montanaro-bell-sampling} for learning stabilizer states exactly.
The basic premise of both algorithms is to identify an $n$-qubit stabilizer state $\ket{\phi}$ by learning its \textit{stabilizer group}---the abelian group of $2^n$ Pauli operators of which $\ket{\phi}$ is a $+1$ eigenstate.
Both algorithms achieve this via a primitive known as \textit{Bell difference sampling}, which performs a measurement on some copies of the unknown state and induces a distribution over Pauli operators.

In the case of Montanaro's algorithm~\cite{montanaro-bell-sampling}, Bell difference sampling on a stabilizer state $\ket{\psi}$ gives the uniform distribution over Pauli operators in the stabilizer group, modulo a global phase that can be ignored. So, to learn $\ket{\psi}$, it suffices to take $O(n)$ Bell difference samples and compute the stabilizer group generated by them. Overall, this runs in polynomial time.

\cite{grewal2023improved} show that if $\ket{\psi}$ has non-negligible fidelity with some stabilizer state $\ket{\phi}$, then a large fraction of the Pauli operators taken from Bell difference sampling will belong to the stabilizer group $S$ of $\ket{\phi}$ (again, modulo global phase).
Moreover, they prove that if $\ket{\phi}$ is the stabilizer state with the largest fidelity, the conditional distribution on $S$ will be ``close enough'' to uniform over $S$, in the sense that the conditional distribution cannot be concentrated on any proper subgroup of $S$.
Their algorithm for finding a stabilizer state $\ket{\phi}$ that approximates $\ket{\psi}$ is then the following: perform Bell difference sampling repeatedly on $\ket{\psi}$, iterate through all maximal subsets of the sampled Pauli operators that generate an abelian group $S$ of size $2^n$, and finally estimate the fidelity of $\ket{\psi}$ with every state $\ket{\phi}$ stabilized by $S$ (modulo phase). This procedure takes exponential time in general, but in some parameter regimes it is more efficient than a brute-force search over all stabilizer states---for example, if the stabilizer fidelity is at least a constant.

In this paper, we take advantage of the additional structure of stabilizer \emph{product} states. 
Whereas learning the stabilizer group of an arbitrary $n$-qubit stabilizer state required identifying $n$ independent generators in both~\cite{montanaro-bell-sampling} and~\cite{grewal2023improved}, the stabilizer group of a stabilizer product state can be identified up to phase from just a \textit{single} generator: the unique Pauli stabilizer of weight $n$. In other words, we only need to assign one of $X$, $Y$, or $Z$ to each qubit. Though we are very unlikely to sample the weight-$n$ generator in a single iteration of Bell difference sampling, we show how to identify it using many fewer than $n$ Bell difference samples.

As a concrete example, suppose that we obtain Bell difference samples $XIZII$, $IY\!ZY\!I$, and $IIIY\!Z$ from a $5$-qubit state $\ket{\psi}$. Imagine we are lucky enough that all of these Paulis belong to the stabilizer group $S$ (up to phase) of the stabilizer product state $\ket{\phi}$ that maximizes fidelity with $\ket{\psi}$.
The only full-weight Pauli operator consistent with these generators is $XY\!ZY\!Z$, because any two Pauli operators in $S$ must commute \textit{locally}---no pair of corresponding single-qubit Paulis can anticommute (e.g., given that $XIZII$ is a stabilizer, there can be no other stabilizers where the first qubit is $Y$ or $Z$, or where the third qubit is $X$ or $Y$). The corresponding stabilizer product states consistent with $XY\!ZY\!Z$ are precisely
\[
\calB = \{\ket{+}, \ket{-}\} \otimes \{\ket{i}, \ket{-i}\} \otimes \{\ket{0}, \ket{1}\} \otimes \{\ket{i}, \ket{-i}\} \otimes \{\ket{0}, \ket{1}\}.
\]
The $2^n$ states in $\calB$ form an orthonormal basis, so we can find the $\ket{\phi} \in \calB$ that maximizes fidelity by repeatedly measuring in this basis and outputting the mode.

So, how many Bell difference samples do we need to learn the stabilizer group $S$ corresponding to a stabilizer product state $\ket{\phi}$? As a first intuition, if the Bell difference samples are uniform over $S$, then the Pauli sampled at each qubit equals $I$ with $1/2$ probability, or else equals some non-identity Pauli $X$, $Y$, or $Z$ with $1/2$ probability. By the union bound, then, roughly $\log n$ samples suffice to see a non-identity Pauli on all $n$ qubits, and hence to learn $S$. This contrasts with the case of general stabilizer states, where~\cite{montanaro-bell-sampling,grewal2023improved} needed to learn $n$ linearly independent Pauli generators from $S$.

This intuition turns out to be very close to the truth: we show, by an entropy counting argument, that having $O(\log n)$ Bell difference samples from $S$ suffices (with high probability) to see non-identity Paulis on all but at most $O(\log (1/\tau))$ of the qubits, where $\tau = \abs{\braket{\phi|\psi}}^2$. The Pauli operators over the remaining qubits can be brute-forced in time $3^{O(\log (1/\tau))} = \poly(1/\tau)$.

The one remaining challenge to overcome is that not all of the Bell difference samples will belong to $S$, but only a subset. 
Because we have no way of knowing \textit{a priori} which of the Bell difference samples belong to $S$, we simply check all locally commuting subsets of size $O(\log n)$.
If $p$ is the probability that a Bell difference sample from $\ket{\psi}$ belongs to $S$, we need $O(\log n / p)$ Bell difference samples for at least $\Omega(\log n )$ of them to come from $S$ with high probability. 
By a lemma of~\cite{grewal2023improved}, we argue that $p \ge \tau^4$, which gives an overall runtime bound of roughly
\[
\binom{\log n / \tau^4}{\log n} \poly(1/\tau) \le n^{O(\log(2/\tau))}.
\]

One other noteworthy feature of the algorithm is a runtime dependence on a parameter $b$ that appears in the entropy counting argument.
The algorithm is correct for any choice of $b$ strictly between $1/2$ and $1$, and in the runtime analysis we can take $b$ to be a constant (say, $2/3$).
A closed form expression for the optimal choice of $b$ in terms of the other parameters is unlikely to exist. For this reason, we expect that in a practical implementation of the algorithm, it would be best to optimize $b$ numerically.

\subsection{Follow-up Work}

Since the initial posting of this work, the agnostic tomography model has motivated the exploration of agnostic learning algorithms for other classically-tractable quantum state families.
For example, \cite{chen2024stabilizer} gave efficient agnostic tomography algorithms for two classes that strictly generalize the set of stabilizer product states, namely: the class of arbitrary stabilizer states, and any class of ``discrete'' product states.
\cite{bakshi24learning} also independently obtained an efficient agnostic learner for discrete product states and designed several agnostic tomography algorithms for general product states.
More recently, \cite{arulandu2025agnosticproductmixedstate} gave an agnostic tomography algorithm for the class of product \textit{mixed} states by establishing a formal connection to robust statistics.
These works rely on techniques that differ substantially from ours.
Taken together, these various algorithms demonstrate the richness of the agnostic tomography model and the potential to achieve efficiency for yet more expressive classes of quantum states.

Beyond these algorithms, agnostic tomography has inspired new developments in other contexts.
\cite{wadwha2024agnostic} initiated the study of agnostic \emph{process} tomography, which aims to recover a structured quantum channel that closely approximates an unknown one.
More recent work by Bri\"et and Castro-Silva~\cite{briet2025near} used the ideas behind agnostic tomography to give new algorithms for some purely classical problems, including a near-optimal quadratic generalization of the Goldreich--Levin algorithm~\cite{levin89hardcore}.
As corollaries, they obtain a fast self-corrector for quadratic Reed--Muller codes and an algorithmic version of the Gowers Inverse Theorem.
Using related ideas, \cite{Arunachalam2025algorithmic} turned agnostic tomography procedures for stabilizer states into an algorithmic version of the Freiman--Ruzsa theorem in combinatorics.
Most recently, \cite{arunachalam2025efficiently} obtained faster quantum algorithms for PAC learning depth-$3$ $\mathsf{AC^0}$ circuits by way of a boosting method for agnostic tomography algorithms.

\section{Preliminaries}
Throughout, $\log_b$ is the logarithm base $b$, and $\log$ is shorthand for the natural logarithm $\log_e$ where $e \approx 2.718$.

\subsection{Probability}

For a discrete random variable $X$ with probability mass function $p$,
\[
\binentropy(X) \coloneqq \sum_x -p(x) \log_2(p(x))
\]
is the binary entropy of $X$. For $p \in [0,1]$, we also denote by
\[
\binentropy(b) \coloneqq -p \log_2(p) - (1-p)\log_2(1-p)
\]
the entropy of a Bernoulli random random variable with success probability $p$. The min-entropy of $X$ is denoted by $\minentropy(X) \coloneqq \log_2(1 / p_{\max})$, where $p_{\max} = \max_x p(x)$.

We use both the binary and min-entropy in the lemma below.
It says that if we take a logarithmic number of samples from a well-spread distribution over $\{0,1\}^n$, we are likely to see $1$s in all but a logarithmic number of positions.
The reader should note that the bound on $|M|$ is essentially tight: we cannot in general hope for anything bigger than $n - \log_2(C)$, because $D$ might be supported on strings that start with $\log_2(C)$ zeros.

\begin{lemma}
\label{lem:entropy_counting}
Let $D$ be a probability distribution over $\{0,1\}^n$ with $D(x) \le \frac{C}{2^n}$ for all $x$.
For some $1/2 < b < 1$, suppose we make $k \ge \log_{1/b}(n/\delta)$ independent draws from $D$ and let $M \subseteq [n]$ be the set of positions in which a $1$ is seen.
Then with probability at least $1 - \delta$, $|M| \ge n - \frac{\log_2(C)}{1 - \binentropy(b)}$.
\end{lemma}

\begin{proof}
    The entropy of $D$ satisfies
    \[
    \binentropy(D) \ge \minentropy(D) \ge -\log_2\left(\frac{C}{2^n}\right) = n - \log_2 (C).
    \]
    By the sub-additivity of entropy, we have:
    \[
    \sum_{j=1}^n \binentropy(D_j) \ge \binentropy(D) \ge n - \log_2(C),
    \]
    where $D_j$ denotes the $j$th bit of $D$.
    
    Let $J \coloneqq \{j \in [n] : \binentropy(D_j) \ge \binentropy(b)\}$.
    Since $\binentropy(D_j) \le 1$ for all $j$, we know that $|J| \ge n - \frac{\log_2(C)}{1 - \binentropy(b)}$.
    We will now prove the claim by showing that the positions corresponding to indices in $J$ are unlikely to all be $0$.
    If we denote the $k$ samples by $x_1,\ldots,x_k$, then:
    \begin{align*}
        \Pr\left[|M| \le n - \frac{\log_2(C)}{1 - \binentropy(b)}\right]
        &\le \Pr[\exists j \in J, \forall i \in [k] : x_{i,j} = 0]\\
        &\le |J| \max_{j \in J} \Pr[\forall i \in [k] : x_{i,j} = 0] && \text{(Union bound)}\\
        &\le |J|b^k && (\Pr[x_{i,j} = 0] \le b)\\
        &\le np^k\\
        &\le \delta,
    \end{align*}
    which completes the proof.
\end{proof}

Let $p$ and $q$ be distributions over $[N]$.
The \textit{$\ell_\infty$ distance} between $p$ and $q$ is defined by
\[
\dist_{\ell_\infty}(p, q) \coloneqq \max_{i \in [N]} \abs{p(i) - q(i)}.
\]

The next lemma quantifies how many samples from a categorical distribution are needed to estimate its parameters. Note the independence of the number of categories.

\begin{lemma}[Based on {\cite[Theorem 9]{Can20}}]
\label{lem:infinity_estimation}
    Let $p$ be an arbitrary distribution over $[N]$, and let $\hat{p}$ denote the empirical distribution derived from $\frac{2\log(2/\delta)}{\eps^2}$ independent samples from $p$. Then
    \[
    \Pr\left[\dist_{\ell_\infty}(\hat{p}, p) > \eps\right] \le \delta.
    \]
\end{lemma}
\begin{proof}[Proof sketch]
Letting $m = \frac{2 \log(2/\delta)}{\eps^2}$, we have
    \begin{align*}
        \Pr\left[\dist_{\ell_\infty}(\hat{p}, p) > \eps\right] &\le \Pr\left[\dist_K(\hat{p}, p) > \eps/2\right]\\
        &\le 2e^{-m \eps^2 / 2}\\
        &\le \delta,
    \end{align*}
    where $\dist_K$ is the \textit{Kolmogorov distance}, and the second line is the DKW inequality~\cite{DKW56}.
\end{proof}

\subsection{Pauli Operators}

We denote by $\pauli_n \coloneqq \{I, X, Y, Z\}^{\otimes n}$ the set of tensor products of the standard single-qubit Pauli operators $I, X, Y, Z$. The set $\overline{\pauli_n} \coloneqq \{\pm 1, \pm i\} \times \pauli_n$, the closure of $\pauli_n$ under multiplication by $\pm 1$ and $\pm i$ phases, forms a group under multiplication called the \emph{Pauli group}.

$\pauli_n$ forms an orthonormal basis for the Hermitian matrices in $\C^{2^n \times 2^n}$ under the normalized Hilbert--Schmidt inner product: $\langle A, B \rangle = 2^{-n}\tr(A^\dagger B)$. As such, for any $n$-qubit state $\ket{\psi}$ we can write
\[
\ketbra{\psi}{\psi} = \frac{1}{2^n} \sum_{P \in \pauli_n} \braket{\psi|P|\psi} P
\]
Orthonormality of $\pauli_n$ implies that $p_\psi(P) \coloneqq \frac{1}{2^n}\braket{\psi|P|\psi}^2$ forms a distribution over $\pauli_n$.
Gross, Nezami, and Walter showed that one can sample from the distribution\footnote{The multiplication $PQ$ is understood to be modulo global phase.}
\begin{equation}
\label{eq:q_psi_as_convolution}
q_{\psi}(P) \coloneqq \sum_{Q \in \pauli_n}p_\psi(Q) p_\psi(PQ)
\end{equation}
in $O(n)$ time using $4$ copies of $\ket{\psi}$, via a procedure known as \emph{Bell difference sampling}~\cite{gross2021schur}.\footnote{\cite{gross2021schur} uses slightly different notation, identifying $\pauli_n$ with a $2n$-dimensional symplectic vector space over $\F_2$. We do not use this notation here because the group structure of Pauli operators turns out to be unimportant for us, though we will introduce this notation in \Cref{sec:appendix}.} 
$q_\psi$ is a distribution because it equals the convolution $p_\psi * p_\psi$.
Bell difference sampling will be a crucial primitive in our algorithm.

\subsection{Stabilizers}

Let $\ket{\psi}$ be an $n$-qubit pure state. We define the \emph{unsigned stabilizer group} of $\ket{\psi}$ to be
\[
\stabpm(\ket{\psi}) \coloneqq \{P \in \pauli_n : P\ket{\psi} = \pm\! \ket{\psi} \}.
\]
A \emph{stabilizer state} is a state $\ket{\psi}$ satisfying $\abs{\stabpm(\ket\psi)} = 2^n$.
A \emph{stabilizer product state} is a stabilizer state that is also a tensor product of single-qubit states.
The set of $n$-qubit stabilizer product states is precisely
\[
\stabprodset \coloneqq \{\ket{0}, \ket{1}, \ket{+}, \ket{-}, \ket{i}, \ket{-i}\}^{\otimes n}.
\]
Observe that the single-qubit stabilizer states are the eigenstates of single-qubit Pauli operators:
\begin{align*}
X\ket{+} = \ket{+},
&&
X\ket{-} = -\!\ket{-},
&&
Y\ket{i} = \ket{i},
&&
Y\ket{-i} = -\!\ket{-i},
&&
Z\ket{0} = \ket{0},
&&
Z\ket{1} = -\!\ket{1}.
\end{align*}
The \emph{stabilizer product fidelity} of a state $\ket{\psi}$ is the largest fidelity of $\ket{\psi}$ with any stabilizer product state, and is denoted by
\[
\calF_\stabprodset(\ket{\psi}) \coloneqq \max_{\ket{\phi} \in \stabprodset} \abs{\braket{\phi|\psi}}^2.
\]
We say that an unsigned stabilizer group $S$ is a \emph{stabilizer product group} if $S$ has the form
\[
S = \{I, P_1\} \otimes \{I, P_2\} \otimes \cdots \otimes \{I, P_n\}
\]
for some single-qubit non-identity Pauli operators $P_1,\ldots,P_n \in \{X, Y, Z\}$.
Note that for any stabilizer product state $\ket{\psi}$, $\stabpm(\ket{\psi})$ is a stabilizer product group.
Conversely, every stabilizer product group $S$ is the unsigned stabilizer group of exactly $2^n$ different stabilizer product states $\ket{\psi}$. These states form a basis called the \emph{$S$-basis}.
The $S$-basis can be measured efficiently given generators for $S$. For example, if 
\[
S = \{I, X\} \otimes \{I, Y\} \otimes \{I, Z\},
\]
the corresponding basis of stabilizer product states stabilized by $S$ is
\[
\{\ket{+}, \ket{-}\} \otimes \{\ket{i}, \ket{-i}\} \otimes \{\ket{0}, \ket{1}\}.
\]
In other words, measuring in the $S$-basis means measuring each qubit according to the non-identity single-qubit Paulis assigned to each qubit by $S$.

\subsection{Additional Notation}
Let $P = P_1 \otimes \cdots \otimes P_n$ and $Q = Q_1 \otimes ... \otimes Q_n$ be $n$-qubit Pauli operators. We say $P$ and $Q$ \emph{locally} commute if for all $i \in [n]$, $P_iQ_i = Q_iP_i$.
Clearly, if $P$ and $Q$ both belong to a stabilizer product group, then they locally commute.

For $P,Q \in \pauli_n$, define
\[
[[P,Q]] = \abs{\{i \in [n] : P_iQ_i \neq Q_iP_i\}}.
\]
Then $P$ and $Q$ locally commute if and only if $[[P,Q]] = 0$.

Suppose a set $S \subset \pauli_n$ satisfies $[[P, Q]] = 0$ for all $P, Q \in S$.
Let $T$ be the set of indices $i \in [n]$ such that for at least one $P \in S$, $P_i \neq I$.
Then for every $i \in T$, there exists a unique $W_i \in \{X, Y, Z\}$ such that for all $x \in S$, $P_i \in \{I, W_i\}$.
For $i \not\in T$, define $W_i \coloneqq I$.
We define the \emph{local span} of $S$ to be
\[
\ewise{S} \coloneqq \{I, W_1\} \otimes \{I, W_2\} \otimes \cdots \otimes \{I, W_n\}.
\]
Note that $\ewise{S}$ is a group of size $2^{|T|}$, and every $P, Q \in \ewise{S}$ satisfies $[[P, Q]] = 0$. Additionally, there are exactly $3^{n - |T|}$ stabilizer product groups $\overline{S}$ that extend $\ewise{S}$ (i.e., for which $S \subseteq \overline{S}$): these are obtained by assigning one of $\{X, Y, Z\}$ to each index $i \not\in T$. An intuitive definition of the local span is that it is the smallest tensor product of single-qubit commuting groups that contains every Pauli operator in $S$.

\section{Algorithm and Analysis}

We are now ready to state the algorithm. In the subsections that follow, we establish its correctness and bound its runtime. As noted in the introduction, the input parameter $b$ is arbitrary, but it affects the runtime; see \Cref{sec:time_complexity} for further discussion.

\begin{algorithm}[H]
\caption{Stabilizer Product State Approximation}
\label{alg:fidelity-estimation}
\SetKwInOut{Promise}{Promise}
\KwInput{Copies of $\ket\psi$, $0 < \eps \le \tau < 1$, $1/2 < b < 1$}
\Promise{$\ket\psi$ has fidelity at least $\tau$ with some stabilizer product state}
\KwOutput{A stabilizer product state $\ket\phi$ such that $\abs{\braket{\phi|\psi}}^2 \geq \calF_\stabprodset(\ket{\psi}) - \eps$ with probability at least $1/8$}

$k \coloneqq \log_{1/b}(2n)$

$m_\mathsf{clique} \coloneqq \frac{1}{\tau^4}(k + 1)$

$t \coloneqq \frac{4\log_2(1/\tau)}{1 - \binentropy(b)}$

$m_\mathsf{est} \coloneqq \frac{8\log\left(16 \binom{m_\mathsf{clique}}{k} 3^t\right)}{\eps^2}$ 

Initialize an empty graph $G$

\RepTimes{$m_\mathsf{clique}$}{\label{line:bell_loop}
Using $4$ copies of $\ket{\psi}$, perform Bell difference sampling to obtain $W \in \pauli_n$

Add a vertex for $W$ in $G$ and an edge to all vertices $W'$ in $G$ such that $[[W,W']] = 0$
}

\ForEach{clique $(V_1, \ldots, V_k) \in G$ of size $k$}{ \label{line:clique-search}

Compute $S \coloneqq \ewise{V_1, \ldots, V_k}$

\If{$\log_2 \abs{S} \ge n - t$}{\label{line:dim_check}

\ForEach{stabilizer product group $\overline{S}$ extending $S$}{\label{line:product_extension}
Measure $m_\mathsf{est}$ copies of $\ket{\psi}$ in the $\overline{S}$-basis

\ForEach{$\ket{\phi}$ in the $\overline{S}$-basis}{\label{line:innermost}
Compute the empirical probability $\hat{o}_{\phi}$ with which $\ket{\phi}$ was measured
}

}

}
}

\Return{
whichever $\ket{\phi}$ maximizes $\hat{o}_\phi$
}

\end{algorithm}

Note that the success probability in \Cref{alg:fidelity-estimation} can be amplified from $1/8$ to $1 - \delta$ in a black-box fashion, by repeating the algorithm $O(\log(1/\delta))$ times and checking which outputs $\ket{\phi}$ have largest fidelity with $\ket{\psi}$; we omit the details.

\subsection{Correctness}
\label{sec:correctness}

We begin with two basic lemmas about the Bell difference sampling distribution. These will turn out to be the \textit{only} two lemmas that must be generalized in \Cref{sec:appendix} in order to prove \Cref{alg:fidelity-estimation}'s correctness on mixed state inputs. The first shows that Bell difference sampling cannot place too much probability on any one Pauli operator.

\begin{lemma}\label{fact:q-psi-trivial-upper}
    For all $n$-qubit quantum states $\ket \psi$ and $P \in \pauli_n$,
    \[q_\psi(P) \leq \frac{1}{2^n}.\]
\end{lemma}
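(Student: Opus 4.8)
The plan is to bound $q_\psi(P) = \sum_{Q \in \pauli_n} p_\psi(Q) p_\psi(PQ)$ directly using the definition $p_\psi(Q) = \frac{1}{2^n}\braket{\psi|Q|\psi}^2$. The key observation is that $q_\psi$ is a \emph{convolution} of the distribution $p_\psi$ with itself (in the group $\pauli_n$ modulo phase), so the standard fact that a convolution is pointwise bounded by the maximum of the two factors—or by any single factor times the total mass of the other—should give the result. Concretely, since $q_\psi(P)$ is an autocorrelation, I expect the bound to follow from pulling out a maximum.

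Let me think about which maximum to pull out.

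First I would observe that $p_\psi$ is a genuine probability distribution, so $\sum_{Q} p_\psi(Q) = 1$ and $\sum_Q p_\psi(PQ) = 1$ as well (reindexing $Q \mapsto PQ$ is a bijection on $\pauli_n$ modulo phase). Then I would bound
\[
q_\psi(P) = \sum_{Q \in \pauli_n} p_\psi(Q)\, p_\psi(PQ) \le \left(\max_{Q \in \pauli_n} p_\psi(Q)\right) \sum_{Q \in \pauli_n} p_\psi(PQ) = \max_{Q \in \pauli_n} p_\psi(Q).
\]
So it suffices to show $\max_Q p_\psi(Q) \le \frac{1}{2^n}$. But this is immediate from the definition: $p_\psi(Q) = \frac{1}{2^n}\braket{\psi|Q|\psi}^2$, and since each $Q \in \pauli_n$ is a Hermitian operator with eigenvalues in $\{\pm 1\}$, we have $\abs{\braket{\psi|Q|\psi}} \le 1$ for any unit vector $\ket\psi$, hence $\braket{\psi|Q|\psi}^2 \le 1$ and $p_\psi(Q) \le \frac{1}{2^n}$. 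Combining the two displayed bounds yields $q_\psi(P) \le \frac{1}{2^n}$.

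I do not anticipate a serious obstacle here; the only point requiring minor care is the reindexing step $\sum_Q p_\psi(PQ) = 1$, which relies on $Q \mapsto PQ$ being a bijection of $\pauli_n$ up to global phase (the phase is immaterial since $p_\psi$ depends only on $\braket{\psi|Q|\psi}^2$, which is phase-invariant). The cleanest writeup simply cites that $p_\psi$ is a distribution, invokes the operator-norm bound $\abs{\braket{\psi|Q|\psi}} \le \norm{Q}_{\mathrm{op}} = 1$ to control $\max_Q p_\psi(Q)$, and chains the two inequalities. An alternative route would use the Cauchy–Schwarz inequality on the convolution sum, but the max-pullout argument above is tighter and more transparent, so that is the approach I would present.
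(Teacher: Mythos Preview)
Your proposal is correct and essentially identical to the paper's proof: both write $q_\psi(P)$ as the convolution $\sum_Q p_\psi(Q)\,p_\psi(PQ)$, bound one factor pointwise by $\max p_\psi \le 1/2^n$, and sum the remaining factor to $1$. The only cosmetic difference is that the paper bounds the $p_\psi(PQ)$ factor and sums $p_\psi(Q)$ directly, avoiding your reindexing step, but the argument is the same.
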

\begin{proof}
    This is a simple consequence of $q_\psi$ being a convolution of $p_\psi$ with itself:
    \begin{align*}
        q_\psi(P) &= \sum_{Q \in \pauli_n} p_\psi(Q) p_\psi(PQ)\\
        &\leq \frac{1}{2^n} \sum_{Q \in \pauli_n} p_\psi(Q) && (\forall P \in \pauli_n,\,p_\psi(P) \leq \frac{1}{2^n})\\
        &= \frac{1}{2^n}. && \qedhere
    \end{align*}
\end{proof}

This next lemma shows that if a state $\ket{\psi}$ has large fidelity with a stabilizer state $\ket{\phi}$, then Bell difference samples have a good probability of belonging to the unsigned stabilizer group of $\ket{\phi}$. A conceptually similar statement was already proved in~\cite[Lemma 5.6]{grewal2023improved}.

\begin{lemma}
\label{lem:fidelity_q}
    Let $\ket{\psi}$ have fidelity at least $\tau$ with stabilizer state $\ket{\phi}$ and let $S = \stabpm(\ket{\phi})$. Then
    \[
    \sum_{P \in S} q_\psi(P) \geq \tau^4.
    \]
\end{lemma}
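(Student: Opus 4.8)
The plan is to prove the lemma in two stages: first reduce the statement about the Bell difference distribution $q_\psi$ to one about the simpler distribution $p_\psi$, and then relate the latter directly to the fidelity $\tau$.

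\textbf{Stage 1: reduce to $p_\psi$.} Recall from \eqref{eq:q_psi_as_convolution} that $q_\psi$ is the autocorrelation of $p_\psi$ under the phase-free group law on $\pauli_n$. Since $S$ is a subgroup, I would expand
\[
\sum_{P \in S} q_\psi(P) = \sum_{P \in S}\sum_{Q \in \pauli_n} p_\psi(Q)\, p_\psi(PQ) = \sum_{Q \in \pauli_n} p_\psi(Q) \sum_{P \in S} p_\psi(PQ),
\]
and observe that as $P$ ranges over $S$, the product $PQ$ ranges bijectively over the coset $QS$. Grouping the outer sum by cosets of $S$ rewrites the right-hand side as $\sum_{c} \big(\sum_{Q \in c} p_\psi(Q)\big)^2$, a sum of squares of the $p_\psi$-mass of each coset $c$. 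Discarding every coset except $S$ itself (all terms are nonnegative) yields
\[
\sum_{P \in S} q_\psi(P) \ge \Big(\sum_{Q \in S} p_\psi(Q)\Big)^2.
\]

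\textbf{Stage 2: relate $p_\psi(S)$ to $\tau$.} It remains to show $\sum_{Q\in S} p_\psi(Q) \ge \tau^2$. I would write $\ketbra{\phi}{\phi}$ in the Pauli basis as in the Preliminaries; since $\ket\phi$ is a stabilizer state, $\braket{\phi|P|\phi}$ vanishes off $S$ and equals a sign $\braket{\phi|P|\phi} \in \{\pm 1\}$ on $S$, so $\ketbra{\phi}{\phi} = 2^{-n}\sum_{P\in S}\braket{\phi|P|\phi}\, P$. Taking the Hilbert--Schmidt inner product with $\ketbra{\psi}{\psi}$ gives
\[
\tau = \abs{\braket{\phi|\psi}}^2 = \frac{1}{2^n}\sum_{P \in S}\braket{\phi|P|\phi}\braket{\psi|P|\psi}.
\]
Applying Cauchy--Schwarz to this sum, using $\braket{\phi|P|\phi}^2 = 1$ and $\abs{S} = 2^n$, bounds $\tau \le 2^{-n/2}\big(\sum_{P\in S}\braket{\psi|P|\psi}^2\big)^{1/2}$, i.e.\ $\tau^2 \le 2^{-n}\sum_{P\in S}\braket{\psi|P|\psi}^2 = \sum_{P\in S}p_\psi(P)$. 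Combining with Stage 1 gives $\sum_{P\in S}q_\psi(P) \ge (\tau^2)^2 = \tau^4$.

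The only delicate point is the coset bookkeeping in Stage 1: one must be careful that multiplication is taken modulo global phase, so that $\pauli_n / S$ is a genuine abelian group and $\{PQ : P \in S\}$ is exactly one coset. I do not anticipate a real obstacle here, but I would double-check that passing to the unsigned setting introduces no sign issues in $p_\psi(PQ)$ --- it does not, since $p_\psi$ depends only on $\braket{\psi|P|\psi}^2$, which is phase-independent. Everything else is two applications of Cauchy--Schwarz (discarding nonnegative coset terms, and the fidelity bound).
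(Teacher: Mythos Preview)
Your proof is correct and follows essentially the same route as the paper: both arguments expand the convolution, use the subgroup structure of $S$ to obtain $\sum_{P\in S} q_\psi(P) \ge \big(\sum_{P\in S} p_\psi(P)\big)^2$, and then invoke $\sum_{P\in S} p_\psi(P) \ge \tau^2$. Your Stage~1 is in fact a slight refinement (you exhibit the exact coset-square identity before discarding terms, whereas the paper just restricts the inner sum to $S$), and in Stage~2 you supply a self-contained Cauchy--Schwarz proof of the $p_\psi$ bound that the paper instead imports as \cite[Lemma~5.2]{grewal2023improved}.
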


\begin{proof}
\begin{align*}
    \sum_{P \in S} q_\psi(P) &= \sum_{P \in S}\sum_{Q \in \pauli_n} p_\psi(Q) p_\psi(PQ) & \\
    &\geq \sum_{P \in S}\sum_{Q \in S} p_\psi(Q) p_\psi(PQ) & \\
    &= \left(\sum_{P \in S} p_\psi(P) \right)^2 & \\
    &\geq \tau^4, & (\mathrm{\cite[Lemma\ 5.2]{grewal2023improved}})\\
\end{align*}
where we have used the fact that $S$ is a subgroup of $\pauli_n$, modulo phase.
\end{proof}

We now prove the lemma that is most central to our algorithm's correctness. Below, think of $b$ as some constant (say, $2/3$). The lemma shows that if we take $O(\log n)$ Bell difference samples, then conditioned on all belonging to the unsigned stabilizer group $S$ of a stabilizer product state of fidelity $\tau$, the local span of the samples is likely to capture all but $O(\log(1/\tau))$ generators of $S$.

\begin{lemma}
\label{lem:sampling_conditional}
    Let $\ket{\psi}$ have fidelity at least $\tau$ with stabilizer product state $\ket{\phi}$. Let $S = \stabpm(\ket{\phi})$. Let $V_1, \ldots, V_k \sim q_\psi \mid S$, meaning that each $V_i$ is sampled from $q_\psi$ conditioned on being in $S$. Fix $1/2 < b < 1$. If $k \ge \log_{1/b}(n/\delta)$, then with probability at least $1 - \delta$, $\log_2 \abs{ \ewise{V_1, \ldots, V_k}} \ge n - \frac{4 \log_2(1/\tau)}{1 - \binentropy(b)}$.
\end{lemma}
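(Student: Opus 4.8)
The plan is to reduce the statement to a min-entropy bound on the conditional distribution $q_\psi \mid S$, which controls the number of ``hard'' qubits, and then finish with an elementary union bound over the ``easy'' qubits.

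First I would set up coordinates. Since $S = \stabpm(\ket{\phi})$ is a stabilizer product group, it has the form $\{I,P_1\} \otimes \cdots \otimes \{I,P_n\}$, so its $2^n$ elements are in bijection with $\{0,1\}^n$ by recording, on each qubit $i$, whether the Pauli is $I$ or $P_i$. Let $Q$ denote the distribution $q_\psi \mid S$ viewed this way on $\{0,1\}^n$, and set $a_i \coloneqq \Pr_{V \sim Q}[V_i = I]$, the marginal probability that a single conditional sample is identity on qubit $i$. The key structural observation is that qubit $i$ is \emph{omitted} from the local span $\ewise{V_1,\dots,V_k}$ precisely when all $k$ independent samples are identity on qubit $i$, an event of probability $a_i^k$; thus the omitted qubits are exactly the indices outside the set $T$ defining $\ewise{V_1,\dots,V_k}$, and $\log_2\abs{\ewise{V_1,\dots,V_k}} = \abs{T}$.

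Second (the heart of the argument) I would establish a min-entropy lower bound on $Q$. Combining $q_\psi(P) \le 2^{-n}$ from \Cref{fact:q-psi-trivial-upper} with $\sum_{P \in S} q_\psi(P) \ge \tau^4$ from \Cref{lem:fidelity_q}, the largest probability of any single outcome under the renormalized distribution $Q$ is at most $2^{-n}/\tau^4$, so $\minentropy(Q) \ge n - 4\log_2(1/\tau)$. Since Shannon entropy dominates min-entropy and is subadditive over marginals (and the marginal of $Q$ on qubit $i$ is Bernoulli with entropy $\binentropy(a_i)$), this yields $\sum_{i=1}^n \binentropy(a_i) \ge \binentropy(Q) \ge \minentropy(Q) \ge n - 4\log_2(1/\tau)$. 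I would then call qubit $i$ \emph{bad} if $a_i > p$ and \emph{good} otherwise, and bound the number $b$ of bad qubits. Because $\binentropy$ is decreasing on $[1/2,1]$ and $p > 1/2$, each bad qubit contributes $\binentropy(a_i) \le \binentropy(p)$ while each good qubit contributes at most $1$, so $n - 4\log_2(1/\tau) \le (n-b) + b\,\binentropy(p)$, which rearranges (using $1 - \binentropy(p) > 0$) to $b \le \frac{4\log_2(1/\tau)}{1 - \binentropy(p)}$.

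Finally I would close with a union bound over the good qubits only. Each good qubit has $a_i \le p$, hence is omitted by all $k = \log_p(\delta/n)$ samples with probability $a_i^k \le p^k = \delta/n$; summing over at most $n$ good qubits shows that with probability at least $1 - \delta$ every good qubit is captured. On this event, $T$ contains all good qubits, so $\log_2\abs{\ewise{V_1,\dots,V_k}} = \abs{T} \ge n - b \ge n - \frac{4\log_2(1/\tau)}{1 - \binentropy(p)}$, as claimed. I expect the entropy-counting step to be the main obstacle: one must route the min-entropy bound through Shannon entropy and subadditivity, and keep straight that it is the bad (high-$a_i$) qubits that carry \emph{small} marginal entropy, which is exactly why the min-entropy budget forces there to be few of them. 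The clean split into ``good qubits, handled by the union bound'' and ``bad qubits, bounded by entropy'' is the conceptual crux that makes the two parameters $\delta$ and $\tau$ decouple in the final bound.
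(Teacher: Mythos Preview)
Your proposal is correct and follows essentially the same route as the paper: both obtain the min-entropy bound $\minentropy(q_\psi\mid S)\ge n-4\log_2(1/\tau)$ from \Cref{fact:q-psi-trivial-upper} and \Cref{lem:fidelity_q}, pass through Shannon entropy and subadditivity to bound $\sum_i \binentropy(a_i)$, isolate at most $\frac{4\log_2(1/\tau)}{1-\binentropy(p)}$ low-entropy qubits, and finish with a union bound over the remaining qubits using $a_i^k\le p^k=\delta/n$. The only cosmetic difference is that the paper defines the ``good'' set $J$ by the entropy threshold $\binentropy(\calY_j)\ge\binentropy(p)$ whereas you threshold directly on $a_i\le p$; your bad set is contained in the paper's complement of $J$, so the same counting works.
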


\begin{proof}
    Let $\calY$ denote a draw from $q_\psi \mid S$.
    Observe that for any $P \in S$, we have
    \begin{align*}
    \Pr_{V \sim \calY}[V = P]
    & = \frac{q_\psi(P)}{\sum_{P' \in S} q_\psi(P')}\\
    &\le \frac{1}{2^n \sum_{P' \in S} q_\psi(P')} && (\mathrm{\Cref{fact:q-psi-trivial-upper}})\\
    &\le \frac{1}{\tau^4 2^n} && (\mathrm{\Cref{lem:fidelity_q}}).
    \end{align*}
    Now, let $D$ be the distribution over $\{0,1\}^n$ obtained by sampling $V \sim \mathcal{Y}$ and putting $1$s at the positions where $V$ has a non-identity Pauli.
    Then for all $x \in \{0,1\}^n$, $D(x) \le \frac{1}{\tau^4 2^n}$ because $\mathcal{Y}$ is supported over the stabilizer product group $S$, and the mapping $\mathcal{Y}$ to $D$ is a bijection between $S$ and $\{0,1\}^n$.
    By \Cref{lem:entropy_counting}, with probability at least $1 - \delta$, $k$ independent draws from $D$ will see $1$s in at least $n - \frac{4\log_2(1/\tau)}{1 - \binentropy(b)}$ locations.
    Therefore, the local span $\ewise{V_1, \ldots, V_k}$ of the corresponding samples from $\mathcal{Y}$ will generate a group of order at least $2^{n - \frac{4\log_2(1/\tau)}{1 - \binentropy(b)}}$.
\end{proof}

\Cref{lem:fidelity_q} guarantees that at least $\tau^4$ of the $q_\psi$ mass is on the stabilizer group $S^*$ of the fidelity-maximizing stabilizer product state. So, it suffices to take $m_\mathsf{clique} = O(\log n / \tau^4)$ Bell difference samples to have a good chance that some subset of size $\Omega(\log n)$ all belong to $S$. 
Hence, we can appeal to \Cref{lem:sampling_conditional} to argue that the algorithm has a decent chance of finding $S^*$ in one of its iterations.

\begin{lemma}
    \label{lem:inner_succeeds_1/4}
    Let $\ket{\varphi}$ be the stabilizer product state that has the largest fidelity with $\ket{\psi}$, and let $S^* = \stabpm(\ket{\varphi})$. Then with probability at least $\frac{1}{4}$, $\ket{\varphi}$ appears as one of the states $\ket{\phi}$ in \Cref{line:innermost}.
\end{lemma}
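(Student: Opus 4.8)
The plan is to exhibit, with probability at least $\tfrac14$ in a single iteration of \Cref{line:outer}, one size-$k$ clique of $G$ all of whose vertices lie in $S^*$ and whose local span $S$ satisfies $\log_2\abs{S}\ge n-t$. Since \Cref{alg:fidelity-estimation} enumerates \emph{every} size-$k$ clique on \Cref{line:clique-search}, processing this one clique is enough to force $\ket{\varphi}$ into \Cref{line:innermost}. I would first dispatch the deterministic half: suppose $V_1,\dots,V_k\in S^*$ and $S\coloneqq\ewise{V_1,\dots,V_k}$ has $\log_2\abs S\ge n-t$. Because $S^*$ is a stabilizer product group, its elements pairwise locally commute, so $V_1,\dots,V_k$ form a clique of size $k$ in $G$. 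Wherever some $V_i$ is non-identity on a qubit $j$ it must equal the $j$th generator of $S^*$, so $\{I,W_j\}\subseteq\{I,P^*_j\}$ on each coordinate and hence $S\subseteq S^*$. The check on \Cref{line:dim_check} then passes by hypothesis, and since $S\subseteq S^*$ the group $S^*$ is one of the stabilizer product groups $\overline S$ extending $S$ that are enumerated on \Cref{line:product_extension}. As $\stabpm(\ket\varphi)=S^*$, the state $\ket\varphi$ lies in the $S^*$-basis and therefore appears in \Cref{line:innermost}.

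It remains to produce such a clique with probability $\tfrac14$, and I would factor this as $\tfrac12\cdot\tfrac12$. Let $M$ be the number of the $m_\mathsf{clique}$ Bell difference samples landing in $S^*$. Each sample lands in $S^*$ independently with probability $\sum_{P\in S^*}q_\psi(P)\ge\tau^4$ by \Cref{lem:fidelity_q}, so $M$ is binomial with mean $\mu\ge\tau^4\cdot m_\mathsf{clique}=k+1$. The first factor is the claim $\Pr[M\ge k]\ge\tfrac12$, which I would obtain from the fact that the median of a binomial lies within $1$ of its mean, hence is at least $\lfloor\mu\rfloor\ge k$, giving $\Pr[M\ge k]\ge\Pr[M\ge\mathrm{median}]\ge\tfrac12$. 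The second factor comes from conditioning on $M\ge k$: the first $k$ samples that land in $S^*$ are distributed as $k$ i.i.d.\ draws from $q_\psi\mid S^*$, since conditioning i.i.d.\ samples on membership in $S^*$ leaves them i.i.d.\ with the conditional law. Then \Cref{lem:sampling_conditional} applies with $\delta=\tfrac12$ — exactly the regime $k=\log_p(1/(2n))=\log_p\!\big(\tfrac{1/2}{n}\big)$ fixed by the algorithm — yielding $\log_2\abs{\ewise{V_1,\dots,V_k}}\ge n-\tfrac{4\log_2(1/\tau)}{1-\binentropy(p)}=n-t$ with conditional probability at least $\tfrac12$. Multiplying the two factors gives the good clique, and thus the appearance of $\ket\varphi$, with probability at least $\tfrac14$.

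I expect the main obstacle to be the estimate $\Pr[M\ge k]\ge\tfrac12$. Because $m_\mathsf{clique}$ is chosen so that $\mu$ only barely exceeds $k$ (it equals $k+1$ in the worst case $\sum_{P\in S^*}q_\psi(P)=\tau^4$), the Chernoff lower tail of \Cref{fact:chernoff} is hopelessly weak here, producing a failure probability close to $1$; the binomial-median fact is what rescues the argument with no slack to spare in $m_\mathsf{clique}$. A secondary point requiring care is the conditioning step: I must verify that conditioned on $\{M\ge k\}$ a canonically chosen $k$-subset of the in-$S^*$ samples is genuinely i.i.d.\ from $q_\psi\mid S^*$ so that \Cref{lem:sampling_conditional} is applicable as stated. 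Integrality of $k$, which is defined as a logarithm, is a minor nuisance absorbed by rounding in the definitions of $k$ and $m_\mathsf{clique}$.
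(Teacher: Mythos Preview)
Your proposal is correct and follows essentially the same route as the paper: the paper likewise factors the success probability as $\tfrac12\cdot\tfrac12$, using \Cref{lem:fidelity_q} together with the binomial-median fact to get $\Pr[M\ge k]\ge\tfrac12$, and then invoking \Cref{lem:sampling_conditional} with $\delta=\tfrac12$ on the $k$ conditional samples, before drawing the identical deterministic conclusion that the resulting clique passes \Cref{line:dim_check} and that $S^*$ appears as some $\overline{S}$ on \Cref{line:product_extension}. Your write-up is in fact somewhat more careful than the paper's about why the first $k$ in-$S^*$ samples are i.i.d.\ from $q_\psi\mid S^*$ and about the integrality of $k$.
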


\begin{proof}
    Let $W_1,\ldots,W_{m_\mathsf{clique}}$ be the Bell difference samples obtained in \Cref{line:bell_loop}. Applying \Cref{lem:fidelity_q}, each $W_i$ has probability at least $\tau^4$ of belonging to $S^*$. Hence, with probability at least $1/2$, there exists a subset $\{V_1,\ldots,V_k\} \subset \{W_1,\ldots,W_{m_\mathsf{clique}}\}$ such that $V_i \in S^*$ for every $i \in [k]$, because the median of a binomial distribution with $N = m_\mathsf{clique}$ trials and rate $P \ge \tau^4$ is at least $\lfloor NP \rfloor \ge k$.
    
    \Cref{lem:sampling_conditional} additionally tells us that conditioned on $V_i \in S^*$ for every $i$, we have $\log_2 \abs{\ewise{V_1,\ldots,V_k}} \ge n - t$ with probability at least $1/2$. So, with overall probability at least $1/4$, there exists a subset $\{V_1,\ldots,V_k\} \subset \{W_1,\ldots,W_{m_\mathsf{clique}}\}$ such that $V_i \in S^*$ for every $i$ \textit{and} $\log_2 \abs{\ewise{V_1,\ldots,V_k}} \ge n - t$. 
    
    Assume henceforth that this has occurred. Because $V_i \in S^*$ for every $i$, and because $S^*$ is a stabilizer product group, $[[V_i,V_j]] = 0$ for each $i, j$. Therefore, $(V_i,\ldots,V_k)$ will be one of the cliques in the graph found in \Cref{line:clique-search}, and the subspace $S = \ewise{V_1,\ldots,V_k}$ will pass the check at \Cref{line:dim_check}. For some choice of $\overline{S}$ in \Cref{line:product_extension} we will have $\overline{S} = S^*$, so one of the states $\ket{\phi}$ in the $\overline{S}$-basis will equal $\ket{\varphi}$.
\end{proof}

We can now complete the proof of \Cref{alg:fidelity-estimation}'s correctness.

\begin{theorem}
\label{thm:alg-correctness}
    \Cref{alg:fidelity-estimation} finds a stabilizer product state $\ket\phi$ such that $\abs{\braket{\phi|\psi}}^2 \geq \calF_\stabprodset(\ket{\psi}) - \eps$ with probability at least $1/8$.
\end{theorem}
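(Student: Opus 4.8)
The plan is to combine the two probabilistic guarantees that are already in place—one that $\ket{\varphi}$, the fidelity-maximizing stabilizer product state, is enumerated as a candidate, and one that every recorded empirical score is accurate—via a union bound, and then finish with a short comparison argument. I would define two ``good events,'' each holding with probability at least $1 - \delta/2$, and condition on both.

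First I would handle the outer loop. By \Cref{lem:inner_succeeds_1/4}, each of the $m_\mathsf{outer}$ iterations of \Cref{line:outer} independently causes $\ket{\varphi}$ to appear as one of the states $\ket{\phi}$ in \Cref{line:innermost} with probability at least $1/4$. Hence the probability that no iteration does so is at most $(3/4)^{m_\mathsf{outer}}$, and since $m_\mathsf{outer} = \log_{3/4}(\delta/2)$ this is exactly $\delta/2$. So with probability at least $1-\delta/2$ the state $\ket{\varphi}$ is enumerated somewhere during the run and its empirical score $\hat{o}_\varphi$ is recorded. Call this the first good event.

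Next I would control the estimation error uniformly. Each time \Cref{line:product_extension} is reached, the algorithm measures $m_\mathsf{est}$ copies of $\ket{\psi}$ in an orthonormal $\overline{S}$-basis; because $\ket{\psi}$ is pure, the true probability of the outcome $\ket{\phi}$ is exactly $\abs{\braket{\phi|\psi}}^2$, so $\hat{o}_\phi$ estimates this fidelity. I would bound the number of such measurement steps: at most $m_\mathsf{outer}$ outer iterations, at most $\binom{m_\mathsf{clique}}{k}$ size-$k$ cliques per iteration, and—because the check in \Cref{line:dim_check} forces $\log_2\abs{S}\ge n-t$, i.e. at most $t$ unfixed qubits—at most $3^t$ extensions $\overline{S}$ per clique, for a total of at most $N \coloneqq m_\mathsf{outer}\binom{m_\mathsf{clique}}{k}3^t$ steps. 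Applying \Cref{lem:infinity_estimation} with error $\eps/2$ and failure probability $\delta' \coloneqq \delta/(2N)$ shows that $m_\mathsf{est} = \frac{8\log(4N/\delta)}{\eps^2}$ samples suffice for one step, which is exactly how $m_\mathsf{est}$ is set in the algorithm; a union bound over the (at most) $N$ steps then gives the second good event: with probability at least $1-\delta/2$, every recorded score satisfies $\abs{\hat{o}_\phi - \abs{\braket{\phi|\psi}}^2}\le \eps/2$.

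Finally I would condition on both good events, which hold jointly with probability at least $1-\delta$. Let $\ket{\phi^*}$ be the returned state, maximizing $\hat{o}_\phi$ over everything computed. Then
\[
\abs{\braket{\phi^*|\psi}}^2 \ge \hat{o}_{\phi^*} - \tfrac{\eps}{2} \ge \hat{o}_\varphi - \tfrac{\eps}{2} \ge \abs{\braket{\varphi|\psi}}^2 - \eps = \calF_\stabprodset(\ket{\psi}) - \eps,
\]
where the first and third inequalities use the second good event, the second inequality uses that $\ket{\phi^*}$ maximizes the empirical score while $\hat{o}_\varphi$ was recorded (first good event), and the final equality is the definition of $\ket{\varphi}$. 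I do not anticipate a serious obstacle; the only care needed is the bookkeeping in the union bound—matching the count $N$ of measured bases to the choice of $m_\mathsf{est}$ so the total estimation failure is exactly $\delta/2$, and splitting the target error $\eps$ into two halves (one for estimating $\ket{\varphi}$'s fidelity, one for the returned state's).
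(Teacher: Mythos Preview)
Your proposal is correct and matches the paper's proof essentially step for step: the same two good events (appearance of $\ket{\varphi}$ via \Cref{lem:inner_succeeds_1/4} and the outer loop, and uniform accuracy of all $\hat{o}_\phi$ via \Cref{lem:infinity_estimation} with a union bound over the $m_\mathsf{outer}\binom{m_\mathsf{clique}}{k}3^t$ measurement steps), followed by the same three-inequality chain. The only cosmetic difference is that you present the two events in the opposite order.
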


\begin{proof}
\Cref{lem:infinity_estimation} implies that in any fixed iteration of \Cref{line:product_extension}, all of the estimates $\hat{o}_\phi$ satisfy
\[
\abs{\hat{o}_\phi - \abs{\braket{\phi|\psi}}^2} \le \eps / 2,
\]
except with failure probability at most
\[
\frac{1}{8 \binom{m_\mathsf{clique}}{k} 3^t}.
\]
The loop in \Cref{line:product_extension} is executed at most $\binom{m_\mathsf{clique}}{k} 3^t$ times, because:
\begin{itemize}
    \item \Cref{line:clique-search} runs at most $\binom{m_\mathsf{clique}}{k}$ iterations;
    \item \Cref{line:product_extension} runs at most $3^t$ iterations per iteration of \Cref{line:clique-search}, because there are at most $3^t$ stabilizer product groups $\overline{S}$ that extend $S$ with size $2^{n - t}$ (this corresponds to assigning one of Pauli $X$, $Y$, or $Z$ to each unassigned qubit).
\end{itemize}
Hence, by the union bound, \textit{all} of the estimates $\hat{o}_\phi$ satisfy
\begin{equation}
\label{eq:estimates_correct}
\abs{\hat{o}_\phi - \abs{\braket{\phi|\psi}}^2} \le \eps / 2,
\end{equation}
except with failure probability at most $1/8$.

Let $\ket{\varphi}$ be the stabilizer product state that has the largest fidelity with $\ket{\psi}$, meaning $\abs{\braket{\varphi|\psi}}^2 = \calF_\stabprodset(\ket{\psi})$. By \Cref{lem:inner_succeeds_1/4}, $\ket{\varphi}$ appears as at least one of the states $\ket{\phi}$ in \Cref{line:innermost}, except with failure probability at most $3/4$.

Assume that both of the above steps succeed, which occurs with probability at least $1/8$ by the union bound. Letting $\ket{\phi}$ be the state that maximizes $\hat{o}_\phi$, we have
\begin{align*}
    \abs{\braket{\phi|\psi}}^2 &\ge \hat{o}_\phi - \eps/2 && (\mathrm{\Cref{eq:estimates_correct}})\\
    &\ge \hat{o}_\varphi - \eps/2 && (\ket{\phi}\text{ maximizes } \hat{o}_\phi)\\
    &\ge \abs{\braket{\varphi|\psi}}^2 - \eps && (\mathrm{\Cref{eq:estimates_correct}})\\
    &= \calF_\stabprodset(\ket{\psi}) - \eps,
\end{align*}
which shows that the algorithm is correct.
\end{proof}

\subsection{Time Complexity}
\label{sec:time_complexity}

The runtime of \Cref{alg:fidelity-estimation} is easily seen to be dominated by computing the estimates $\hat{o}_\phi$ in \Cref{line:innermost}. Of course, \Cref{line:innermost} should not literally be implemented by iterating through all $2^n$ states $\ket{\phi}$ in the $\overline{S}$-basis, because in general $m_\mathsf{est} \ll 2^n$---it is more efficient to maintain a dictionary of the states $\ket{\phi}$ that were measured, along with their measurement counts. If implemented in this fashion, then the overall runtime of the algorithm is bounded by
\[
\binom{m_\mathsf{clique}}{k} 3^t m_\mathsf{est} \cdot O(n)
\]
because \Cref{line:clique-search} runs at most $\binom{m_\mathsf{clique}}{k}$ iterations, \Cref{line:product_extension} runs at most $3^t$ iterations (as argued in the proof of \Cref{thm:alg-correctness} above), \Cref{line:innermost} runs $m_\mathsf{est}$ iterations, and each iteration of \Cref{line:innermost} takes $O(n)$ time to perform the measurement in a stabilizer product basis. Plugging in $m_\mathsf{est}$, the runtime becomes
\begin{align}
    \label{eq:runtime_1}
    \binom{m_\mathsf{clique}}{k} 3^t \log \left(\binom{m_\mathsf{clique}}{k} 3^t\right) \cdot O(n/\eps^2).
\end{align}
If we take $b$ to be some constant such as $2/3$, then $3^t \le \poly(1/\tau)$ and
\[
\binom{m_\mathsf{clique}}{k} \le \binom{O(\log n / \tau^4)}{O(\log n)} \le \poly(1/\tau)^{O(\log n)} \le n^{O(\log(2/\tau))}.
\]
Therefore, the total runtime of the algorithm is at most
\begin{align*}
    \frac{n^{O(\log(2/\tau))}}{\eps^2}.
\end{align*}
Note that the factor $2$ in the $O(\log(2/\tau))$ is an arbitrary constant greater than $1$, and is only necessary when $\tau$ is sufficiently close to $1$.
If we assume $\tau$ at most some constant bounded away from $1$, it becomes correct to write $O(\log(1/\tau)$ in the exponent.

A bound on the explicit constant in the exponent requires more careful accounting.
Consider
\begin{align*}
    f(n,\tau,b) &\coloneqq \binom{m_\mathsf{clique}}{k} 3^t,
\end{align*}
so that the runtime becomes quasilinear in $f$, i.e.,
\[
f(n,\tau,b) \log(f(n,\tau,b)) \cdot O(n/\eps^2).
\]
First we bound the term corresponding to \Cref{line:clique-search}:
\begin{align*}
\binom{m_\mathsf{clique}}{k}
&= \binom{\frac{1}{\tau^4}(\log_{1/b}(2n) + 1)}{\log_{1/b}(2n)}\\
&\le \left(\left(1 + \frac{1}{\log_{1/b}(2n)} \right)^{\log_{1/b}(2n)} \binom{\frac{1}{\tau^4}\log_{1/b}(2n)}{\log_{1/b}(2n)}\right) && \binom{a+b}{k} \le \left(\frac{a+b}{a}\right)^k\binom{a}{k}\\
&\le e \binom{\frac{1}{\tau^4}\log_{1/b}(2n)}{\log_{1/b}(2n)} && (1 + 1/x)^x \le e\\
&\le e(e\tau^{-4})^{\log_{1/b}(2n)} && \binom{a}{k} \le \left(\frac{ae}{k}\right)^k\\
&= e(2n)^{\log_{1/b}(e / \tau^4)}\\
&\le e(2n)^{4\log_{1/b}(e/\tau)}.
\end{align*}
Then
\[
f(n, \tau, b) \le e(2n)^{4\log_{1/b}(e/\tau)} 3^{\frac{4\log_2(1/\tau)}{1 - \binentropy(b)}}.
\]
Optimizing the choice of $b$ for general values of $n$ and $\tau$ seems difficult, but the bound does not suggest any clear benefit to choosing $b$ particularly close to $1/2$ or $1$ instead of some fixed constant in between.

\section{Discussion and Open Problems}
Can the runtime of \Cref{alg:fidelity-estimation} be improved to polynomial in $n$ and $1/\tau$, rather than quasipolynomial? At minimum, it might be possible to further optimize the leading constant in the exponent of the runtime.
We also suspect that the algorithm could be made more efficient in practice.
For example, rather than fixing $m_\mathsf{clique}$ in advance, one could take Bell difference samples in an online fashion until the graph $G$ contains a sufficiently large clique.
This probably occurs faster in practice than our worst-case bounds predict.

Are there other classes of quantum states with simple classical descriptions that admit efficient agnostic tomography algorithms?
Other examples to consider might include outputs of low-depth circuits and free-fermionic states.

\section*{Acknowledgments}
SG is supported (via Scott Aaronson) by a Vannevar Bush Fellowship from the US Department
of Defense, the Berkeley NSF-QLCI CIQC Center, a Simons Investigator Award, and the Simons “It
from Qubit” collaboration. 
VI is supported by an NSF Graduate Research Fellowship.
WK acknowledges support from the U.S.\ Department of Energy, Office of Science, National
Quantum Information Science Research Centers, Quantum Systems Accelerator. 
DL is supported by NSF award FET-2243659.

This work was done in part while SG, VI, and DL were visiting the Simons Institute for the Theory of Computing, supported by NSF QLCI Grant No.\ 2016245.

\bibliographystyle{quantum}
\bibliography{refs}

\begin{thebibliography}{10}

\bibitem{shor1997polynomial}
Peter~W. Shor.
\newblock ``Polynomial-time algorithms for prime factorization and discrete logarithms on a quantum computer''.
\newblock \href{https://dx.doi.org/10.1137/S0097539795293172}{SIAM Journal on Computing {\bf 26}, 1484--1509}~(1997).

\bibitem{brandao2016product}
Fernando G. S.~L. Brand{\~a}o and Aram~W. Harrow.
\newblock ``Product-state approximations to quantum states''.
\newblock \href{https://dx.doi.org/10.1007/s00220-016-2575-1}{Communications in Mathematical Physics {\bf 342}, 47--80}~(2016).

\bibitem{kallaugher2024complexity}
John Kallaugher, Ojas Parekh, Kevin Thompson, Yipu Wang, and Justin Yirka.
\newblock ``{Complexity Classification of Product State Problems for Local Hamiltonians}''.
\newblock In Raghu Meka, editor, 16th Innovations in Theoretical Computer Science Conference (ITCS 2025).
\newblock \href{https://dx.doi.org/10.4230/LIPIcs.ITCS.2025.63}{Volume 325 of Leibniz International Proceedings in Informatics (LIPIcs), pages 63:1--63:32}.
\newblock Dagstuhl, Germany~(2025). Schloss Dagstuhl -- Leibniz-Zentrum f{\"u}r Informatik.

\bibitem{garcia2007matrix}
David Perez-Garcia, Frank Verstraete, Michael~M. Wolf, and J.~Ignacio Cirac.
\newblock ``Matrix product state representations''.
\newblock \href{https://dx.doi.org/10.26421/QIC7.5-6-1}{Quantum Info. Comput. {\bf 7}, 401–430}~(2007).

\bibitem{gottesman1998heisenberg}
Daniel Gottesman.
\newblock ``{The Heisenberg Representation of Quantum Computers}''~(1998).
\newblock  \href{http://arxiv.org/abs/quant-ph/9807006}{arXiv:quant-ph/9807006}.

\bibitem{aaronson2004simulation}
Scott Aaronson and Daniel Gottesman.
\newblock ``{Improved Simulation of Stabilizer Circuits}''.
\newblock \href{https://dx.doi.org/10.1103/physreva.70.052328}{Physical Review A{\bf 70}}~(2004).

\bibitem{verstraete2008matrix}
Frank Verstraete, Valentin Murg, and J.~Ignacio Cirac.
\newblock ``Matrix product states, projected entangled pair states, and variational renormalization group methods for quantum spin systems''.
\newblock \href{https://dx.doi.org/10.1080/14789940801912366}{Advances in Physics {\bf 57}, 143--224}~(2008).

\bibitem{sun2024stabilizer}
Jiace Sun, Lixue Cheng, and Shi-Xin Zhang.
\newblock ``Stabilizer ground states for simulating quantum many-body physics: theory, algorithms, and applications''.
\newblock \href{https://dx.doi.org/10.22331/q-2025-06-24-1782}{{Quantum} {\bf 9}, 1782}~(2025).

\bibitem{cramer2010efficient}
Marcus Cramer, Martin~B. Plenio, Steven~T. Flammia, Rolando Somma, David Gross, Stephen~D. Bartlett, Olivier Landon-Cardinal, David Poulin, and Yi-Kai Liu.
\newblock ``Efficient quantum state tomography''.
\newblock \href{https://dx.doi.org/10.1038/ncomms1147}{Nature Communications {\bf 1}, 1--7}~(2010).

\bibitem{montanaro-bell-sampling}
Ashley Montanaro.
\newblock ``{Learning stabilizer states by Bell sampling}''~(2017).
\newblock  \href{http://arxiv.org/abs/1707.04012}{arXiv:1707.04012}.

\bibitem{grewal2023efficient}
Sabee Grewal, Vishnu Iyer, William Kretschmer, and Daniel Liang.
\newblock ``Efficient {L}earning of {Q}uantum {S}tates {P}repared {W}ith {F}ew {N}on-{C}lifford {G}ates''.
\newblock \href{https://dx.doi.org/10.22331/q-2025-11-06-1907}{{Quantum} {\bf 9}, 1907}~(2025).

\bibitem{leone2023learning}
Lorenzo Leone, Salvatore F.~E. Oliviero, and Alioscia Hamma.
\newblock ``Learning t-doped stabilizer states''.
\newblock \href{https://dx.doi.org/10.22331/q-2024-05-27-1361}{{Quantum} {\bf 8}, 1361}~(2024).

\bibitem{hangleiter2023bell}
Dominik Hangleiter and Michael~J. Gullans.
\newblock ``Bell sampling from quantum circuits''.
\newblock \href{https://dx.doi.org/10.1103/PhysRevLett.133.020601}{Phys. Rev. Lett. {\bf 133}, 020601}~(2024).

\bibitem{aaronson2023efficient}
Scott Aaronson and Sabee Grewal.
\newblock ``{Efficient Tomography of Non-Interacting-Fermion States}''.
\newblock In 18th Conference on the Theory of Quantum Computation, Communication and Cryptography (TQC 2023).
\newblock \href{https://dx.doi.org/10.4230/LIPIcs.TQC.2023.12}{Volume 266 of Leibniz International Proceedings in Informatics (LIPIcs), pages 12:1--12:18}.
\newblock ~(2023).

\bibitem{arunachalam2022phase}
Srinivasan Arunachalam, Sergey Bravyi, Arkopal Dutt, and Theodore~J. Yoder.
\newblock ``{Optimal Algorithms for Learning Quantum Phase States}''.
\newblock In 18th Conference on the Theory of Quantum Computation, Communication and Cryptography (TQC 2023).
\newblock \href{https://dx.doi.org/10.4230/LIPIcs.TQC.2023.3}{Volume 266 of Leibniz International Proceedings in Informatics (LIPIcs), pages 3:1--3:24}.
\newblock ~(2023).

\bibitem{huang2024learning}
Hsin-Yuan Huang, Yunchao Liu, Michael Broughton, Isaac Kim, Anurag Anshu, Zeph Landau, and Jarrod~R. McClean.
\newblock ``Learning shallow quantum circuits''.
\newblock In Proceedings of the 56th Annual ACM Symposium on Theory of Computing.
\newblock \href{https://dx.doi.org/10.1145/3618260.3649722}{Page 1343–1351}.
\newblock STOC 2024New York, NY, USA~(2024). Association for Computing Machinery.

\bibitem{FocusQuantumTomography2013}
Konrad Banaszek, Marcus Cramer, and David Gross.
\newblock ``Focus on quantum tomography''.
\newblock \href{https://dx.doi.org/10.1088/1367-2630/15/12/125020}{New Journal of Physics {\bf 15}, 125020}~(2013).

\bibitem{odonnell2017efficient}
Ryan O'Donnell and John Wright.
\newblock ``Efficient quantum tomography {II}''.
\newblock In Proceedings of the 49th Annual ACM SIGACT Symposium on Theory of Computing.
\newblock \href{https://dx.doi.org/10.1145/3055399.3055454}{Page 962–974}.
\newblock STOC 2017New York, NY, USA~(2017). Association for Computing Machinery.

\bibitem{grewal2023improved}
Sabee Grewal, Vishnu Iyer, William Kretschmer, and Daniel Liang.
\newblock ``Improved stabilizer estimation via {B}ell difference sampling''.
\newblock In Proceedings of the 56th Annual ACM Symposium on Theory of Computing.
\newblock \href{https://dx.doi.org/10.1145/3618260.3649738}{Page 1352–1363}.
\newblock STOC 2024New York, NY, USA~(2024). Association for Computing Machinery.

\bibitem{buadescu2021improved}
Costin B\u{a}descu and Ryan O'Donnell.
\newblock ``{Improved Quantum Data Analysis}''.
\newblock In Proceedings of the 53rd Annual ACM SIGACT Symposium on Theory of Computing.
\newblock \href{https://dx.doi.org/10.1145/3406325.3451109}{Page 1398–1411}.
\newblock STOC 2021. Association for Computing Machinery~(2021).

\bibitem{anshu2023survey}
Anurag Anshu and Srinivasan Arunachalam.
\newblock ``A survey on the complexity of learning quantum states''.
\newblock \href{https://dx.doi.org/10.1038/s42254-023-00662-4}{Nature Reviews Physics {\bf 6}, 59--69}~(2024).

\bibitem{Bravyi2019simulationofquantum}
Sergey Bravyi, Dan Browne, Padraic Calpin, Earl Campbell, David Gosset, and Mark Howard.
\newblock ``{Simulation of quantum circuits by low-rank stabilizer decompositions}''.
\newblock \href{https://dx.doi.org/10.22331/q-2019-09-02-181}{{Quantum} {\bf 3}, 181}~(2019).

\bibitem{OT22-learning}
Yingkai Ouyang and Marco Tomamichel.
\newblock ``Learning quantum graph states with product measurements''.
\newblock In 2022 IEEE International Symposium on Information Theory (ISIT).
\newblock \href{https://dx.doi.org/10.1109/ISIT50566.2022.9834440}{Pages 2963--2968}.
\newblock ~(2022).

\bibitem{Gollakota2022hardnessofpac}
Aravind Gollakota and Daniel Liang.
\newblock ``On the {H}ardness of {PAC}-learning {S}tabilizer {S}tates with {N}oise''.
\newblock \href{https://dx.doi.org/10.22331/q-2022-02-02-640}{{Quantum} {\bf 6}, 640}~(2022).

\bibitem{poremba_et_al:LIPIcs.ITCS.2026.108}
Alexander Poremba, Yihui Quek, and Peter Shor.
\newblock ``{The Learning Stabilizers with Noise Problem}''.
\newblock In Shubhangi Saraf, editor, 17th Innovations in Theoretical Computer Science Conference (ITCS 2026).
\newblock \href{https://dx.doi.org/10.4230/LIPIcs.ITCS.2026.108}{Volume 362 of Leibniz International Proceedings in Informatics (LIPIcs), pages 108:1--108:19}.
\newblock Dagstuhl, Germany~(2026). Schloss Dagstuhl -- Leibniz-Zentrum f{\"u}r Informatik.

\bibitem{khesin2025average}
Andrey~Boris Khesin, Jonathan~Z. Lu, Alexander Poremba, Akshar Ramkumar, and Vinod Vaikuntanathan.
\newblock ``{Average-Case Complexity of Quantum Stabilizer Decoding}''~(2025).
\newblock  \href{http://arxiv.org/abs/2509.20697}{arXiv:2509.20697}.

\bibitem{bakshi24learning}
Ainesh Bakshi, John Bostanci, William Kretschmer, Zeph Landau, Jerry Li, Allen Liu, Ryan O'Donnell, and Ewin Tang.
\newblock ``Learning the closest product state''.
\newblock In Proceedings of the 57th Annual ACM Symposium on Theory of Computing.
\newblock \href{https://dx.doi.org/10.1145/3717823.3718207}{Page 1212–1221}.
\newblock STOC '25New York, NY, USA~(2025). Association for Computing Machinery.

\bibitem{aaronson2018shadow}
Scott Aaronson.
\newblock ``{Shadow Tomography of Quantum States}''.
\newblock In Proceedings of the 50th Annual ACM SIGACT Symposium on Theory of Computing.
\newblock \href{https://dx.doi.org/10.1145/3188745.3188802}{Page 325–338}.
\newblock STOC 2018. Association for Computing Machinery~(2018).

\bibitem{HKP20-classical-shadows}
Hsin-Yuan Huang, Richard Kueng, and John Preskill.
\newblock ``Predicting many properties of a quantum system from very few measurements''.
\newblock \href{https://dx.doi.org/10.1038/s41567-020-0932-7}{Nature Physics {\bf 16}, 1050--1057}~(2020).

\bibitem{bakshi2024high}
Ainesh Bakshi, Allen Liu, Ankur Moitra, and Ewin Tang.
\newblock ``High-temperature {G}ibbs states are unentangled and efficiently preparable''.
\newblock In 2024 IEEE 65th Annual Symposium on Foundations of Computer Science (FOCS).
\newblock \href{https://dx.doi.org/10.1109/FOCS61266.2024.00068}{Pages 1027--1036}.
\newblock ~(2024).

\bibitem{chen2024stabilizer}
Sitan Chen, Weiyuan Gong, Qi~Ye, and Zhihan Zhang.
\newblock ``Stabilizer bootstrapping: A recipe for efficient agnostic tomography and magic estimation''.
\newblock In Proceedings of the 57th Annual ACM Symposium on Theory of Computing.
\newblock \href{https://dx.doi.org/10.1145/3717823.3718191}{Page 429–438}.
\newblock STOC '25New York, NY, USA~(2025). Association for Computing Machinery.

\bibitem{arulandu2025agnosticproductmixedstate}
Alvan Arulandu, Ilias Diakonikolas, Daniel Kane, and Jerry Li.
\newblock ``{Agnostic Product Mixed State Tomography via Robust Statistics}''~(2025).
\newblock  \href{http://arxiv.org/abs/2510.08472}{arXiv:2510.08472}.

\bibitem{wadwha2024agnostic}
Chirag Wadhwa, Laura Lewis, Elham Kashefi, and Mina Doosti.
\newblock ``Agnostic process tomography''.
\newblock \href{https://dx.doi.org/10.1103/q2nb-zg9m}{PRX Quantum {\bf 6}, 040371}~(2025).

\bibitem{briet2025near}
Jop Briët and Davi Castro-Silva.
\newblock ``A near-optimal quadratic {Goldreich-Levin} algorithm''~(2025).
\newblock  \href{http://arxiv.org/abs/2505.13134}{arXiv:2505.13134}.

\bibitem{levin89hardcore}
Oded Goldreich and Leonid~A. Levin.
\newblock ``A hard-core predicate for all one-way functions''.
\newblock In Proceedings of the Twenty-First Annual ACM Symposium on Theory of Computing.
\newblock \href{https://dx.doi.org/10.1145/73007.73010}{Page 25–32}.
\newblock STOC '89New York, NY, USA~(1989). Association for Computing Machinery.

\bibitem{Arunachalam2025algorithmic}
Srinivasan Arunachalam, Davi Castro-Silva, Arkopal Dutt, and Tom Gur.
\newblock ``Algorithmic polynomial {Freiman-Ruzsa} theorems''~(2025).
\newblock  \href{http://arxiv.org/abs/2509.02338}{arXiv:2509.02338}.

\bibitem{arunachalam2025efficiently}
Srinivasan Arunachalam, Arkopal Dutt, Alexandru Gheorghiu, and Michael de~Oliveira.
\newblock ``Learning depth-3 circuits via quantum agnostic boosting''~(2025).
\newblock  \href{http://arxiv.org/abs/2509.14461}{arXiv:2509.14461}.

\bibitem{Can20}
Clément~L. Canonne.
\newblock ``{A short note on learning discrete distributions}''~(2020).
\newblock  \href{http://arxiv.org/abs/2002.11457}{arXiv:2002.11457}.

\bibitem{DKW56}
Aryeh Dvoretzky, Jack Kiefer, and Jacob Wolfowitz.
\newblock ``Asymptotic minimax character of the sample distribution function and of the classical multinomial estimator''.
\newblock \href{https://dx.doi.org/10.1214/aoms/1177728174}{The Annals of Mathematical Statistics {\bf 27}, 642--669}~(1956).

\bibitem{gross2021schur}
David Gross, Sepehr Nezami, and Michael Walter.
\newblock ``{Schur--Weyl duality for the Clifford group with applications: Property testing, a robust Hudson theorem, and de Finetti representations}''.
\newblock \href{https://dx.doi.org/10.1007/s00220-021-04118-7}{Communications in Mathematical Physics {\bf 385}, 1325--1393}~(2021).

\bibitem{audenaert2005entanglement}
Koenraad M~R Audenaert and Martin~B Plenio.
\newblock ``Entanglement on mixed stabilizer states: normal forms and reduction procedures''.
\newblock \href{https://dx.doi.org/10.1088/1367-2630/7/1/170}{New Journal of Physics {\bf 7}, 170}~(2005).

\end{thebibliography}

\appendix

\section{Agnostic Tomography of Mixed States}
\label{sec:appendix}
In this appendix, we show that \Cref{alg:fidelity-estimation} remains correct when run on mixed state inputs. To do so, we generalize some properties of Bell difference sampling (specifically, \Cref{fact:q-psi-trivial-upper,lem:fidelity_q}) to mixed states. This will require different notation for working with Pauli operators, which we now introduce.

\subsection{Weyl Operators}
We begin by identifying the unsigned Pauli operators $\pauli_n$ with a vector space $\F_2^{2n}$ via the formalism of Weyl operators, following Gross, Nezami, and Walter~\cite{gross2021schur}. For $x = (a, b) \in \F_2^{2n}$, the \emph{Weyl operator} $W_x$ is 
\[
W_x \coloneqq 
i^{a'\cdot b'}(X^{a_1} Z^{b_1}) \otimes \dots \otimes (X^{a_n} Z^{b_n}),
\]
where $a',b' \in \Z^n$ are the embeddings of $a,b$ into $\Z^n$. Each $W_x$ is an element of $\pauli_n$ and vice-versa, so this yields a one-to-one correspondence between $\F_2^{2n}$ and $\pauli_n$.

For $x, y \in \F_2^{2n}$, we call $[x,y]$ the \textit{symplectic product} of $x$ and $y$ over $\F_2^{2n}$, which is defined by
\[
[x,y] = x_1 \cdot y_{n+1} + x_2\cdot y_{n+2} + ... + x_n \cdot y_{2n} + x_{n+1} \cdot y_1 + x_{n+2} \cdot y_2 + ... + x_{2n} \cdot y_n,
\]
Operationally, the symplectic product captures commutation relations: $W_x$ and $W_y$ commute if and only if $[x, y] = 0$.

\subsection{Bell Difference Sampling Mixed States}

Performing Bell difference sampling on a mixed state $\rho$ induces a distribution $q_\rho$ over $\pauli_n$ (or equivalently, $\F_2^{2n}$), but it is a bit harder to define than for pure states.
\cite[Equation 3.7]{gross2021schur} shows that Bell difference sampling can be written as a projective measurement $\{\Pi_x\}_{x \in \F_2^{2n}}$, where
    \[\Pi_x \coloneqq \frac{1}{4^n}\sum_{a \in \F_2^{2n}} (-1)^{[x, a]}W_a^{\otimes 4}.
\]
Thus, we can write $q_\rho(x) \coloneqq \tr(\Pi_x \rho^{\otimes 4})$ as the probability that Bell difference sampling obtains $a \in \F_2^{2n}$.
We can also define $p_\rho(x) \coloneqq \tr(W_x \rho)^2/2^n$ in analogy with $p_\psi$. We note that for mixed states, unlike for pure states, $p_\rho$ is no longer a distribution (see \cref{fact:sum-of-p-mixed}), and $q_\rho$ can no longer be expressed as the convolution of $p_\rho$ with itself, as we did in \Cref{eq:q_psi_as_convolution}.
Nonetheless, we shall see that $p_\rho$ and $q_\rho$ are related in useful ways.

\subsection{Properties of Bell Difference Sampling}

We first show a connection between the purity of a state $\rho$ and $p_\rho$

\begin{lemma}\label{fact:sum-of-p-mixed}
Let $\rho$ be a quantum mixed state. Then
    \[\tr(\rho^2) = \sum_{x \in \F_2^{2n}} p_\rho(x).\]
\end{lemma}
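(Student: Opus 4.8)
The plan is to recognize this as Parseval's identity for the Weyl (equivalently, Pauli) basis and prove it directly from the orthogonality relation that the paper has already recorded for $\pauli_n$. First I would note that each $W_x$ is Hermitian: the phase $i^{a'\cdot b'}$ in the definition of $W_x$ is chosen precisely so that $W_x^\dagger = W_x$. Combined with the fact (stated earlier) that $\pauli_n$ is an orthogonal basis for Hermitian matrices under the Hilbert-Schmidt inner product, this gives the orthogonality relation
\[
\tr(W_x W_y) = \tr(W_x^\dagger W_y) = 2^n \, \delta_{x,y}.
\]

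Next I would expand $\rho$ in this basis. Since $\{W_x\}_{x \in \F_2^{2n}}$ is an orthogonal basis and $\rho$ is Hermitian, we can write
\[
\rho = \frac{1}{2^n}\sum_{x \in \F_2^{2n}} \tr(W_x \rho)\, W_x,
\]
where each coefficient $\tr(W_x \rho)$ is real because both $\rho$ and $W_x$ are Hermitian. I would then compute $\tr(\rho^2) = \tr(\rho^\dagger \rho)$ by substituting this expansion for both factors of $\rho$, obtaining a double sum
\[
\tr(\rho^2) = \frac{1}{4^n}\sum_{x,y \in \F_2^{2n}} \tr(W_x \rho)\,\tr(W_y \rho)\,\tr(W_x W_y).
\]
Applying the orthogonality relation $\tr(W_x W_y) = 2^n \delta_{x,y}$ collapses the double sum to a single sum over $x$, leaving
\[
\tr(\rho^2) = \frac{1}{2^n}\sum_{x \in \F_2^{2n}} \tr(W_x \rho)^2 = \sum_{x \in \F_2^{2n}} p_\rho(x),
\]
where the final equality is just the definition $p_\rho(x) = \tr(W_x \rho)^2 / 2^n$.

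There is no real obstacle here, as the statement is a textbook Parseval identity once the basis expansion is in hand. The only point requiring a little care is confirming Hermiticity of $W_x$, so that the orthogonality relation stated for $\pauli_n$ transfers verbatim to the Weyl operators and $\tr(W_x W_y)$ can be identified with $\tr(W_x^\dagger W_y)$; this follows directly from the phase convention in the definition of $W_x$. Notably, the identity holds for any Hermitian $\rho$, not just density matrices, and it makes transparent why $\sum_x p_\rho(x) = \tr(\rho^2) \le 1$ with equality exactly when $\rho$ is pure, recovering the fact that $p_\rho$ is a genuine distribution only in the pure-state case.
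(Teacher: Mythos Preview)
Your proposal is correct and follows essentially the same route as the paper: expand $\rho$ in the Weyl/Pauli basis, compute $\tr(\rho^2)$ as a double sum, collapse it via the orthogonality relation $\tr(W_xW_y)=2^n\delta_{x,y}$, and identify the result with $\sum_x p_\rho(x)$. Your added remarks on Hermiticity of $W_x$ and the Parseval interpretation are accurate but not needed beyond what the paper already uses.
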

\begin{proof}
Because the Weyl operators form an orthogonal basis for the space of Hermitian $n$-qubit operators, we can express $\rho$ in this basis as:
\begin{align*}
\rho = \frac{1}{2^{n}} \sum_{x \in \F_2^{2n}} \tr(W_x \rho)  W_x. 
\end{align*}
Consequently,
\begin{align*}
   \tr(\rho^2) 
   &=\frac{1}{4^{n}} \tr\left(\sum_{x,y \in \F_2^{2n}} \tr(W_x \rho)\tr(W_y \rho)  W_x W_y\right) \\ 
   &=\frac{1}{4^{n}} \sum_{x,y \in \F_2^{2n}} \tr(W_x \rho)\tr(W_y \rho)  \tr\left(W_x W_y\right) \\ 
   &=\frac{1}{2^n}\sum_{x \in \F_2^{2n}} \tr(W_x \rho)^2 && (\tr(W_xW_y) = 2^n \delta_{xy}) \\ 
   &= \sum_{x \in \F_2^{2n}} p_\rho(x). && \qedhere
\end{align*}
\end{proof}

With this in hand, we can establish an analogue of \Cref{fact:q-psi-trivial-upper} for mixed states.

\begin{lemma}
\label{fact:q-pho-trivial-upper-mixed}
    For all $n$-qubit mixed states $\rho$ and $x \in \F_2^{2n}$,
    \[
    q_\rho(x) \leq \frac{\tr(\rho^2)}{2^n}\leq \frac{1}{2^n}.
    \]
\end{lemma}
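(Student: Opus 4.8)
The plan is to prove the bound $q_\rho(x) \le \tr(\rho^2)/2^n$ by mimicking the proof of \Cref{fact:q-psi-trivial-upper}, but using the newly-available mixed-state tools. The second inequality $\tr(\rho^2)/2^n \le 1/2^n$ is immediate because $\tr(\rho^2) \le 1$ for any density matrix. So the entire content is the first inequality, and I expect it to follow by expanding $q_\rho(x) = \tr(\Pi_x \rho^{\otimes 4})$ using the explicit form of the projector $\Pi_x$, and then recognizing the resulting sum in terms of $p_\rho$, to which \Cref{fact:sum-of-p-mixed} can be applied.

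Concretely, I would start from the definition
\[
q_\rho(x) = \tr(\Pi_x \rho^{\otimes 4}) = \frac{1}{4^n} \sum_{a \in \F_2^{2n}} (-1)^{[x,a]} \tr\!\left(W_a^{\otimes 4}\, \rho^{\otimes 4}\right) = \frac{1}{4^n} \sum_{a \in \F_2^{2n}} (-1)^{[x,a]} \tr(W_a \rho)^4.
\]
The key observation is that each term $\tr(W_a\rho)^4 \ge 0$, and $(-1)^{[x,a]} \le 1$, so dropping the sign factors only increases the sum:
\[
q_\rho(x) \le \frac{1}{4^n} \sum_{a \in \F_2^{2n}} \tr(W_a \rho)^4.
\]
Now I would relate this back to $p_\rho$. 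Since $p_\rho(a) = \tr(W_a\rho)^2/2^n$, we have $\tr(W_a\rho)^4 = 4^n\, p_\rho(a)^2$, and moreover $\tr(W_a\rho)^2 \le \tr(\rho^2) \cdot 2^n$ is not quite the cleanest route; instead I would use that $p_\rho(a) \le \tr(\rho^2)/2^n$ is itself what we want to leverage — but to avoid circularity, the cleaner approach is to bound $p_\rho(a)^2 \le p_\rho(a) \cdot \max_{a'} p_\rho(a')$ and then sum.

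The main obstacle is establishing the pointwise bound $p_\rho(a) \le \tr(\rho^2)/2^n$ (equivalently $\tr(W_a\rho)^2 \le \tr(\rho^2)$), which drives everything. This follows from Cauchy--Schwarz in the Hilbert--Schmidt inner product: $\tr(W_a \rho)^2 = \langle W_a, \rho\rangle^2 \le \langle W_a, W_a\rangle \langle \rho,\rho\rangle = 2^n \tr(\rho^2)$, using $\tr(W_a^\dagger W_a) = \tr(I) = 2^n$. With this pointwise bound in hand, I would finish cleanly:
\[
q_\rho(x) \le \frac{1}{4^n}\sum_{a} \tr(W_a\rho)^4 = \sum_a p_\rho(a)^2 \le \left(\max_a p_\rho(a)\right)\sum_a p_\rho(a) \le \frac{\tr(\rho^2)}{2^n} \cdot \tr(\rho^2) \le \frac{\tr(\rho^2)}{2^n},
\]
where the penultimate step invokes \Cref{fact:sum-of-p-mixed} for $\sum_a p_\rho(a) = \tr(\rho^2)$ together with the pointwise bound, and the last step uses $\tr(\rho^2) \le 1$. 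This recovers the claimed inequality. I would double-check whether the authors prefer to route the argument directly through the convolution-free structure rather than the $p_\rho(a)^2$ detour, but the Cauchy--Schwarz pointwise bound is the conceptual crux in any case.
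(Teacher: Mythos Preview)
Your overall structure matches the paper's proof exactly: expand $q_\rho(x) = \sum_a (-1)^{[x,a]} p_\rho(a)^2$, drop the signs, and then bound $\sum_a p_\rho(a)^2$ by pulling out a pointwise bound on $p_\rho(a)$ and applying \Cref{fact:sum-of-p-mixed}. The gap is in the pointwise bound you claim. You assert $p_\rho(a) \le \tr(\rho^2)/2^n$, i.e.\ $\tr(W_a\rho)^2 \le \tr(\rho^2)$, and say it follows from Cauchy--Schwarz; but the Cauchy--Schwarz computation you wrote down actually yields $\tr(W_a\rho)^2 \le 2^n \tr(\rho^2)$, which is off by a factor of $2^n$ from what you need. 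And the stronger bound you want is simply false: take $\rho = I/2^n$ and $a = 0$, so $p_\rho(0) = 1/2^n$ while $\tr(\rho^2)/2^n = 1/4^n$. With only the Cauchy--Schwarz bound $p_\rho(a) \le \tr(\rho^2)$, your chain gives $q_\rho(x) \le \tr(\rho^2)^2$, which does not imply $\tr(\rho^2)/2^n$.

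The fix, which is what the paper does, is to use the weaker but correct pointwise bound $p_\rho(a) \le 1/2^n$, coming from $\lvert \tr(W_a\rho)\rvert \le \lVert W_a\rVert_\infty \lVert \rho\rVert_1 = 1$. Then
\[
q_\rho(x) \le \sum_a p_\rho(a)^2 \le \frac{1}{2^n}\sum_a p_\rho(a) = \frac{\tr(\rho^2)}{2^n},
\]
invoking \Cref{fact:sum-of-p-mixed} for the last equality. So the ``conceptual crux'' is not Cauchy--Schwarz but rather the trivial H\"older bound $\lvert\tr(W_a\rho)\rvert \le 1$; once you swap that in, your argument coincides with the paper's.
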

\begin{proof}
\begin{align*}
        q_\rho(x) 
        &= \sum_{a \in \F_2^{2n}} (-1)^{[x, a]} p_\rho(a)^2 \\ 
        &\leq \sum_{a \in \F_2^{2n}} p_\rho(a)^2 \\ 
        &\leq \frac{1}{2^n}\sum_{a \in \F_2^{2n}} p_\rho(a) \\ 
        &\leq \frac{\tr(\rho^2)}{2^n}.
\end{align*}
    Above, the first line follows from the definitions of $q_\rho$ and $p_\rho$ and the linearity of the trace. 
    The third line follows from the observation that $p_\rho(x) \leq \frac{1}{2^n}$ always. The last line is an application of \cref{fact:sum-of-p-mixed}.
\end{proof}

Finally, we prove the analogue of \Cref{lem:fidelity_q}, showing that $q_\rho$ places large mass on the unsigned stabilizer groups of large-fidelity stabilizer states.

\begin{lemma}
\label{lem:fidelity_q_mixed}
    Let $\rho$ be an $n$-qubit mixed state and let $\ket{\phi}$ be a stabilizer state
    that has fidelity $\tau$ with $\rho$, meaning $\braket{\phi|\rho|\phi} = \tau$. Let $S = \{x \in \F_2^{2n} : W_x \in \stabpm(\ket{\phi})\}$. Then
    \[
    \sum_{x \in S} q_\rho(x) \geq \tau^4.
    \]
\end{lemma}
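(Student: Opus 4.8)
The plan is to follow the same three-line skeleton as the pure-state proof of \Cref{lem:fidelity_q}, but to replace the self-convolution identity for $q_\psi$ (which is unavailable for mixed states) with the Fourier-type expression $q_\rho(x) = \sum_{a \in \fttn} (-1)^{[x,a]} p_\rho(a)^2$ that already appears in the proof of \Cref{fact:q-pho-trivial-upper-mixed}. Summing this over $x \in S$ and exchanging the order of summation gives
\[
\sum_{x \in S} q_\rho(x) = \sum_{a \in \fttn} p_\rho(a)^2 \sum_{x \in S} (-1)^{[x,a]}.
\]

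First I would evaluate the inner character sum. Since $\ket{\phi}$ is a stabilizer state, $S$ is an $n$-dimensional subspace of $\fttn$ on which the symplectic form vanishes, so $S$ is Lagrangian and equals its own symplectic complement $S^{\perp}$. Character orthogonality then gives $\sum_{x \in S}(-1)^{[x,a]} = 2^n$ when $a \in S$ and $0$ otherwise: for $a \notin S = S^{\perp}$ the linear functional $x \mapsto [x,a]$ is nonzero and hence balanced on $S$. This collapses the double sum to $\sum_{x \in S} q_\rho(x) = 2^n \sum_{a \in S} p_\rho(a)^2$.

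Next I would establish the mixed-state analogue of \cite[Lemma 5.2]{grewal2023improved}, namely $\sum_{a \in S} p_\rho(a) \ge \tau^2$. Writing the pure stabilizer projector in its signed Pauli decomposition $\ketbra{\phi}{\phi} = \frac{1}{2^n}\sum_{a \in S} s_a W_a$ with signs $s_a \in \{\pm 1\}$, the fidelity becomes $\tau = \tr(\ketbra{\phi}{\phi}\rho) = \frac{1}{2^n}\sum_{a \in S} s_a \tr(W_a \rho)$. Applying Cauchy--Schwarz over the $2^n$ indices of $S$ yields $4^n \tau^2 \le 2^n \sum_{a \in S}\tr(W_a \rho)^2$, that is, $\sum_{a \in S} p_\rho(a) = \frac{1}{2^n}\sum_{a \in S}\tr(W_a \rho)^2 \ge \tau^2$. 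This step is a routine adaptation of the pure-state argument once the projector is expanded in the signed Pauli basis.

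Finally, because the Fourier route produces $\sum_{a} p_\rho(a)^2$ rather than the squared mass $\bigl(\sum_{a} p_\rho(a)\bigr)^2$ that arose for free from the convolution in the pure case, I would close the gap with the power-mean (Cauchy--Schwarz) inequality $\sum_{a \in S} p_\rho(a)^2 \ge \frac{1}{2^n}\bigl(\sum_{a \in S} p_\rho(a)\bigr)^2$, which gives
\[
\sum_{x \in S} q_\rho(x) = 2^n \sum_{a \in S} p_\rho(a)^2 \ge \Bigl(\sum_{a \in S} p_\rho(a)\Bigr)^2 \ge \tau^4.
\]
I expect the main obstacle to be conceptual rather than computational: recognizing that the convolution identity no longer holds, and that one must instead route through the Weyl-operator Fourier expansion, evaluate the Lagrangian character sum to isolate the mass on $S$, and insert the extra power-mean step to recover the square. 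Everything else reduces to the two Cauchy--Schwarz applications above.
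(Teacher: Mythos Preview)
Your proposal is correct and follows essentially the same route as the paper: expand $q_\rho$ via the Weyl Fourier identity, sum over $S$, use that $S$ is Lagrangian to collapse the character sum to $2^n$ times the $p_\rho(a)^2$ mass on $S$, and then invoke convexity together with the Pauli decomposition of $\ketbra{\phi}{\phi}$ to land on $\tau^4$. The only cosmetic difference is in the convexity step: the paper applies a single Jensen inequality for $x\mapsto x^4$ to pass from $\frac{1}{2^n}\sum_{a\in S}\tr(W_a\rho)^4$ directly to $\bigl(\frac{1}{2^n}\sum_{a\in S}|\tr(W_a\rho)|\bigr)^4$ and then the triangle inequality, whereas you split this into two Cauchy--Schwarz applications, routing through the intermediate statement $\sum_{a\in S}p_\rho(a)\ge\tau^2$ (the mixed-state analogue of \cite[Lemma 5.2]{grewal2023improved}). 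Both are equivalent by the power-mean inequalities, and your version has the minor advantage of isolating that intermediate lemma as a reusable fact.
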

\begin{proof}

    By definition of $q_\rho$, we have
    \begin{align*}
        \sum_{x \in S} q_\rho(x) &=
        \frac{1}{4^n} \sum_{x \in S} \sum_{a \in \F_2^{2n}} (-1)^{[x,a]} \tr\left(W_a^{\otimes 4} \rho^{\otimes 4}\right)
        \\
        &= \sum_{x \in S} \sum_{a \in \F_2^{2n}} (-1)^{[x,a]} p_\rho(a)^2
        \\
        &=  \sum_{a \in \F_2^{2n}} p_\rho(a)^2 \sum_{x \in S} (-1)^{[x,a]} 
        \end{align*}
        Now, if $a \in S$ then $[x,a] = 0$ for all $x \in S$, because all elements of a stabilizer group must commute. Otherwise, $[x,a] = 0$ for half of the $x \in S$ and $[x,a] = 1$ for the other half. This means that
        \[
        \sum_{x \in S} (-1)^{[x,a]} = 
        \begin{cases}
            2^n & a \in S,\\
            0 & a \not\in S.
        \end{cases}
        \]
        As such, we can simplify:
        \begin{align*}
        \sum_{x \in S} q_\rho(x) &= 2^n \sum_{x \in S} p_\rho(x)^2 \\
        &= \frac{1}{2^n} \sum_{x \in S} \tr\left(W_x \rho\right)^4\\
        &\ge \frac{1}{{16}^n} \left(\sum_{x \in S} \abs{\tr\left(W_x \rho\right)} \right)^4 && (\text{Jensen's Inequality})
    \end{align*}
    To complete the proof, let $\overline{S} \coloneqq \{P \in \overline{\pauli_n} : P\ket{\phi} = \ket{\phi}\}$ be the (signed) stabilizer group of $\ket{\phi}$. By well-known calculations (e.g.,~\cite[Equation (2)]{audenaert2005entanglement}), the density matrix of a stabilizer state can be written as an average over the stabilizer group:
    \begin{equation}
    \label{eq:stabilizer_sum}
    \ketbra{\phi}{\phi} = \frac{1}{2^n} \sum_{P \in \overline{S}} P.
    \end{equation}
    Then:
    \begin{align*}
        \sum_{x \in S} q_\rho(x)
        &\ge \frac{1}{{16}^n} \left(\sum_{x \in S} \abs{\tr\left(W_x \rho\right)} \right)^4\\
        &\ge \frac{1}{{16}^n} \left(\sum_{P \in \overline{S}} \tr\left(P \rho\right) \right)^4 && (\text{Triangle inequality})\\
        &= \frac{1}{16^n} \left(2^n \tr\left(\ketbra{\phi}{\phi} \rho\right)\right)^4 && (\text{Linearity of Trace and \Cref{eq:stabilizer_sum}})\\
        &= \braket{\phi|\rho|\phi}^4\\
        &= \tau^4. && \qedhere
    \end{align*}
\end{proof}

It now follows that \Cref{alg:fidelity-estimation} remains correct when run on mixed state inputs.

\begin{corollary}
    On input a mixed state $\rho$ in place of $\ket{\psi}$, \Cref{alg:fidelity-estimation} outputs a stabilizer product state $\ket\phi$ such that $\braket{\phi|\rho|\phi} \geq \max_{\ket{\varphi} \in \stabprodset} \braket{\varphi|\rho|\varphi}  - \eps$ with probability at least $1/8$.
\end{corollary}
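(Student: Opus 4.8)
The plan is to exploit the structural observation the authors have already flagged: the pure-state correctness argument---\Cref{lem:sampling_conditional}, \Cref{lem:inner_succeeds_1/4}, and \Cref{thm:alg-correctness}---touches the Bell difference sampling distribution only through two facts, namely the pointwise bound $q_\psi(P) \le 1/2^n$ (\Cref{fact:q-psi-trivial-upper}) and the fidelity lower bound $\sum_{P \in S} q_\psi(P) \ge \tau^4$ (\Cref{lem:fidelity_q}). Since the appendix now supplies exact mixed-state analogues---\Cref{fact:q-pho-trivial-upper-mixed} and \Cref{lem:fidelity_q_mixed}, with $\tau$ reinterpreted as $\braket{\phi|\rho|\phi}$---the entire argument can be re-run with $\ket{\psi}$ replaced by $\rho$ and these two lemmas swapped in, leaving every other step intact.

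Concretely, I would first re-derive \Cref{lem:sampling_conditional} for $\rho$. Its proof establishes $\Pr_{V \sim \calY}[V = P] \le 1/(\tau^4 2^n)$ using only \Cref{fact:q-psi-trivial-upper} and \Cref{lem:fidelity_q}; substituting \Cref{fact:q-pho-trivial-upper-mixed} and \Cref{lem:fidelity_q_mixed} produces the identical bound, after which the min-entropy inequality, subadditivity of entropy, and union bound are all insensitive to the input state---they use only that $S$ is a stabilizer product group. Next, \Cref{lem:inner_succeeds_1/4} carries over once \Cref{lem:fidelity_q_mixed} replaces \Cref{lem:fidelity_q}: each Bell difference sample still lands in $S^*$ with probability at least $\tau^4$, and conditioned on $k$ of them lying in $S^*$, the mixed-state version of \Cref{lem:sampling_conditional} supplies the local-span bound $\log_2 \abs{\ewise{V_1,\ldots,V_k}} \ge n - t$.

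The one place demanding a genuine (if brief) check is the estimation step in \Cref{thm:alg-correctness}, and this is where I expect the only real subtlety to lie. In the pure case, measuring $\ket{\psi}$ in the $\overline{S}$-basis yields outcome $\ket{\phi}$ with probability $\abs{\braket{\phi|\psi}}^2$, so $\hat{o}_\phi$ estimates $\abs{\braket{\phi|\psi}}^2$. For mixed $\rho$, the $\overline{S}$-basis is still a bona fide orthonormal measurement basis---this depends only on $\overline{S}$ being a stabilizer product group, a fact about Pauli operators rather than the input---and measuring $\rho$ in it yields $\ket{\phi}$ with probability exactly $\braket{\phi|\rho|\phi}$. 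Hence $\hat{o}_\phi$ now estimates $\braket{\phi|\rho|\phi}$, and \Cref{lem:infinity_estimation} applies verbatim since it concerns only empirical estimation of a categorical distribution and is agnostic to its origin.

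Finally, the concluding inequality chain of \Cref{thm:alg-correctness} transfers with $\abs{\braket{\cdot|\psi}}^2$ replaced throughout by $\braket{\cdot|\rho|\cdot}$, giving $\braket{\phi|\rho|\phi} \ge \braket{\varphi|\rho|\varphi} - \eps = \max_{\ket{\varphi} \in \stabprodset} \braket{\varphi|\rho|\varphi} - \eps$ for the returned state, with the same $1 - \delta$ success probability from the two union-bound events (accurate estimates, and $\ket{\varphi}$ appearing in \Cref{line:innermost}). The main obstacle is therefore not any single hard calculation but the audit confirming that purity was never invoked anywhere except through the two sampling lemmas and the interpretation of the measurement outcome; once that verification is complete, the corollary follows immediately.
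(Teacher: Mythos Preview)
Your proposal is correct and follows essentially the same approach as the paper: the paper's own proof is a brief sketch instructing the reader to re-run the correctness argument of \Cref{sec:correctness} with \Cref{fact:q-psi-trivial-upper} replaced by \Cref{fact:q-pho-trivial-upper-mixed} and \Cref{lem:fidelity_q} replaced by \Cref{lem:fidelity_q_mixed}, which is exactly the substitution you outline. Your write-up is in fact more thorough than the paper's, since you explicitly verify that the $\overline{S}$-basis measurement on $\rho$ yields outcome $\ket{\phi}$ with probability $\braket{\phi|\rho|\phi}$---a point the paper leaves implicit.
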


\begin{proof}[Proof sketch]
    Follow the proof steps in \Cref{sec:correctness}, replacing \Cref{fact:q-psi-trivial-upper} with \Cref{fact:q-pho-trivial-upper-mixed}, replacing \Cref{lem:fidelity_q} with \Cref{lem:fidelity_q_mixed}, and identifying each $x \in \F_2^{2n}$ with the corresponding Weyl operator $W_x \in \pauli_n$.
\end{proof}

\end{document}